\def\eqref#1{equation~\ref{#1}}
\def\1{\bm{1}}
\DeclareMathAlphabet{\mathsfit}{\encodingdefault}{\sfdefault}{m}{sl}
\SetMathAlphabet{\mathsfit}{bold}{\encodingdefault}{\sfdefault}{bx}{n}
\titlespacing{\paragraph}{0pt}{3.25ex plus 1ex minus .2ex}{0.75em}
\crefname{algorithm}{Algorithm}{Algorithms}
\newtheorem{assumption}{Assumption}
\newcommand{\set}[1]{\left\{ {#1} \right\}}
\newcommand{\brackets}[1]{\left[ {#1} \right]}
\newcommand{\tuple}[1]{\left( {#1} \right)}
\newcommand{\xz}{x_{\le 0}}
\newcommand{\nctx}{n_{\textnormal{ctx}}}
\newcommand{\nvocab}{n_{\textnormal{vocab}}}
\newcommand{\prompt}{\textnormal{[prompt]}}
\newcommand{\prob}{\textnormal{[prob]}}
\newcommand{\new}{\textnormal{[new]}}
\newcommand{\ret}{\textnormal{[return]}}
\newcommand{\wall}[1]{T_{\text{wall}} \left[ #1 \right] }
\newcommand{\mmJ}{\mathfrak{J}}
\newcommand{\lookahead}{\texttt{lookahead}\xspace}
\newcommand{\wait}{\texttt{wait}\xspace}
\newcommand{\ceilfn}[1]{\left\lceil #1 \right\rceil}
\newcommand{\comb}[1]{$\left\langle \text{#1} \right\rangle$}
\newcommand{\ms}{\text{ms}}
\newcommand{\paragraphnosep}[1]{\textbf{#1}}
\newcommand{\wis}{Weizmann Institute of Science}
\newcommand{\intel}{Intel Labs}
\newcommand{\tamu}{Texas A\&M University}
\title{distributed speculative inference (dsi):\\speculation parallelism for provably faster lossless language model inference}
\author{Nadav Timor\thanks{\textbf{Correspondence:}~~\texttt{\href{mailto:nadav.timor@weizmann.ac.il}{nadav.timor@weizmann.ac.il}}} \\
\wis\\
\And
Jonathan Mamou\\
\intel\\
\And
Daniel Korat \\
\intel\\
\And
Moshe Berchansky \\
\intel\\
\And
Oren Pereg\\
\intel\\
\And
Moshe Wasserblat\\
\intel\\
\And
Tomer Galanti\\
\tamu\\
\And
Michal Gordon\\
\wis\\
\And
David Harel\\
\wis
}
\begin{document}

\maketitle

\begin{abstract}
This paper introduces {\em distributed~speculative~inference~(DSI)}, a novel inference algorithm that is provably faster than speculative inference (SI) \citep{leviathan2023fast, chen2023accelerating, miao2024specinfer, sun2025block, timor2025accelerating} and standard autoregressive inference (non-SI). Like other SI algorithms, DSI operates on frozen language models (LMs), requiring no training or architectural modifications, and it preserves the target distribution. Prior studies on SI have demonstrated empirical speedups over non-SI—but rely on sufficiently fast and accurate drafters, which are often unavailable in practice. We identify a gap where SI can be slower than non-SI if drafters are too slow or inaccurate. We close this gap by proving that DSI is faster than both SI and non-SI—given any drafters. DSI is therefore not only faster than SI, but also unlocks the acceleration of LMs for which SI fails. DSI leverages {\em speculation parallelism (SP)}, a novel type of task parallelism, to orchestrate target and drafter instances that overlap in time, establishing a new foundational tradeoff between computational resources and latency. Our simulations show that DSI is 1.29-1.92x faster than SI in single-node setups for various off-the-shelf LMs and tasks. We open-source all our code.~\footnote{\url{https://github.com/keyboardAnt/distributed-speculative-inference}}
\end{abstract}

\section{Introduction}\label{sec:intro}

Generative language models (LMs) have demonstrated unprecedented success across various tasks \citep{openai2023gpt, li2023starcoder, andreas2022language, bubeck2023sparks}. Reducing the inference latency of these models is a critical challenge for improving downstream applications and enabling further test-time scaling \citep{openai2024openaio1card, muennighoff2025s1}. Faster inference can also drive broader adoption by real-time applications, which often prioritize low latency over other objectives. With the growing availability of hardware and decreasing costs, effectively utilizing more computing power for faster inference is becoming increasingly important.

A promising approach to reducing LM inference latency is \textit{speculative inference (SI)}, built upon the principles of \citet{6312218}. SI employs faster {\em drafter} models to predict likely token continuations, which are then {\em verified} concurrently using modern hardware's data parallelism, known as {\em batching} (e.g., on GPUs) \citep{stern2018blockwise}. SI has been widely adopted in practice thanks to novel lossless verification methods, demonstrating empirical speedups of up to 4x over standard autoregressive inference \citep{leviathan2023fast, chen2023accelerating, miao2024specinfer, sun2025block, timor2025accelerating} and increasing throughput in multi-request settings \citep{sadhukhan2025magicdec}. The core advantage of SI is that it can generate more than one token per forward pass of a given {\em target} language model.

However, SI relies on a sequential draft-then-verify process, where each verification must be completed before drafting new tokens. As a result, SI is beneficial only if the drafters are sufficiently fast and accurate. If the drafters are too slow or inaccurate, SI fails to provide a speedup—or is even slower than standard autoregressive inference. This fundamental limitation of SI has not been addressed in prior work.

\paragraphnosep{Contributions.} To overcome the fundamental limitation of speculative inference (SI) as a sequential algorithm, we introduce {\em distributed speculative inference (DSI)}, a novel inference algorithm that parallelizes SI by leveraging {\em speculation parallelism (SP)}, a new type of task parallelism. Unlike SI, which blocks drafting until verification is complete, DSI overlaps verification with drafting, transforming SI into a non-blocking algorithm and effectively hiding verification latency.

Our key contributions are:
\begin{itemize}
  \item Introducing speculation parallelism (SP): A novel type of task parallelism that eliminates the blocking nature of SI by enabling concurrent verification and drafting using multiple instances of the target and drafter models.
  \item Provable speedup over SI and non-SI: We prove that DSI is always at least as fast as non-SI and is strictly faster than both SI and non-SI in expectation.
  \item Broader applicability: DSI accelerates inference even with drafters for which SI fails, making it effective for a wider range of LMs.
  \item Scalability to available hardware: DSI can orchestrate an arbitrary number of GPUs ($\geq 2$) by adjusting its {\em lookahead} hyperparameter.
  \item Empirical validation: Our simulations show that DSI is 1.29-1.92x faster than SI across various models and tasks in realistic single-node, multi-GPU setups.
\end{itemize}



\section{Preliminaries}

Below we describe speculative inference and how to measure latency. For rigorous definitions of autoregressive language models (LMs) and next-token prediction, we refer the reader to Appendix~\ref{appendix:extended_preliminaries}.

{\bf Speculative Inference (SI)} is an approach for accelerating the inference of a {\em target} LM  $f_m$ (e.g., a member of the GPT series). Such methods use faster LMs $f_1,\dots,f_{m-1}$ that approximate the target model $f_m$ in order to reduce the total inference time. For example, \citet{leviathan2023fast,chen2023accelerating} may reduce the amount of time it takes to generate $N>1$ tokens from target model $f_2$ given a prompt $x_{\le 0}$ by using batching as follows. The inference starts by drafting $k$ tokens $x'_{i} := f_1(x'_{\le i-1}) := f_1(x_{\le 0} \oplus x'_1 \oplus \dots \oplus x'_{i-1})$ for $i \in [k]$ and $1 \le k < N$ using a faster drafter model $f_1$. Then, the algorithm sends the prompts $\{x'_{\le i}\}^{k}_{i=0}$ altogether as one input batch to the target model $f_2$. The idea is to take advantage of the data parallelism that modern GPUs offer to compute the logits corresponding to the prompts $\{x'_{\le i}\}^{k}_{i=0}$ in parallel, hence faster than computing these $k+1$ individual logits sequentially. Given the logits, the algorithm generates $[1, k+1]$ tokens without additional forward passes. By repeating this process, the algorithm can generate $N > k + 1$ tokens.

Straightforward algorithms of speculative inference are typically {\em lossless in expectation}, i.e., they generate tokens from the same distributions as the target would generate without speculation. Naive algorithms of speculation guarantee returning the same tokens as the target \citep{gante2023assisted,spector2023accelerating,timor2025accelerating}. More sophisticated algorithms of speculation might generate different tokens, but their generated tokens follow the distribution of the target \citep{leviathan2023fast,chen2023accelerating,miao2024specinfer,sun2025block,timor2025accelerating}.

To implement distributed algorithms for speculative inference, we use multiple \textbf{processors} or \textbf{servers}, which are hardware components capable of \textbf{executing threads}. Processors can compute forward passes and sample tokens from the output probability vectors and we assume that threads can run in parallel. When using DSI we will run sequences of drafter models $f_{j_1}, f_{j_2}, \dots, f_{j_k}$, where the first model takes $x_{\le 0}$ and returns some token $x^{j_1}_{1}$, the second takes $x_{\le 0} \oplus x^{j_1}_1$ as a prompt and returns $x^{j_1,j_2}_2$, and so on. As such, in order to denote that a given thread is computing the output of $f_{j_{k}}$ on a sequence $x^{j_1,\dots,j_{k-1}}_{\le k-1} := x_{\le 0} \oplus x^{j_1}_1 \oplus \dots \oplus x^{j_1,\dots,j_{k-1}}_{k-1}$, we denote $C_{J}$, where $J=(j_1,\dots,j_k)$.
When a thread $C_{J}$ computes an LM, we denote the output probability vector by $C_{J} \prob$. If $C_{J}$ samples a new token from $C_{J} \prob$, we denote this token by $C_{J} \new$. For example, thread $C_J$ implementing \eqref{def:sampling_from_lm} will have
\begin{equation*}
C_{J} \prompt := x_{\le i},~C_{J} \prob := f \tuple{C_{J} \prompt} \textnormal{ and } C_{J} \new \sim C_{J} \prob.
\end{equation*}
Once a thread $C_{J}$ finishes sampling a new token, the thread outputs the concatenation of $C_{J} \prompt$ and $C_{J} \new$. Following the example in \eqref{def:sampling_from_lm}, we have
\begin{align*}
  C_{J} \ret := C_{J} \prompt \oplus \left( C_{J} \new \right) := (x_{\le 0}, x_1, \ldots, x_{i+1}).
\end{align*}
A new thread that was initiated by $C_{J}$ is denoted by $C_{J \oplus (j)}$, where $J \oplus (j)$ is the concatenation of $J$ and $(j)$. 
The set of all the threads that originate from $C_{J}$ is $\{ C_{J \oplus J'} : J' \text{ is a nonempty tuple} \}$.
We assume that terminating a concurrent thread terminates all the threads that originate from it. 

\textbf{Time} in this paper is the \textit{wall time}. We measure the time that passes from the initiation of a \textit{task} until its termination.
A task is a nonempty set of threads, denoted by $\set{C_J : J \in \mmJ}$. 
Its time is
\begin{equation*}
\wall{\set{C_J}_{J \in \mmJ}} := 
\max_{J \in \mmJ} \text{(Timepoint $C_J$ finishes)} - \min_{J \in \mmJ} \text{(Timepoint $C_J$ starts)}.
\end{equation*}
When a task consists of a single thread, we omit the curly brackets, namely,
\begin{equation*}
\wall{C_J} := \wall{\set{C_J}} \text{ where } |\set{C_J}| = 1.
\end{equation*}
Note that two threads, denoted by $C_{J}$ and $C_{J'}$, may run concurrently and overlap in time. Hence, it is possible that $\max \set{\wall{C_{J}}, \wall{C_{J'}}} \le \wall{\set{C_{J}, C_{J'}}} < \wall{C_{J}} + \wall{C_{J'}}$. However, if $C_{J}$ and $C_{J'}$ do not overlap in time, then $\wall{\set{C_{J}, C_{J'}}} \ge \wall{C_{J}} + \wall{C_{J'}}$.

\section{Distributed speculative inference}\label{sec:speculative_generation}

This section presents a theoretically sound orchestration framework for parallelizing SI~\citep{leviathan2023fast,chen2023accelerating,miao2024specinfer,timor2025accelerating} that is essentially decoupled from the underlying computation of forward passes. Our method applies to any fixed number of processors ($\ge 2$), as shown later. Initially, we introduce a naive version of our approach, which assumes access to a sufficiently large number of processors, ensuring that threads never need to wait.

Before presenting our algorithm, we first discuss the limitations of existing SI methods. SI reduces a target forward whenever a draft token is accepted. With an accurate and fast drafter, SI can significantly cut the number of target forwards, potentially speeding up the inference compared to non-SI. For instance, if on average one draft token is accepted per iteration, the number of target forwards drops to half since the average target forward generates two tokens. As long as the total drafting latency is less than the saved latency from reduced target forwards, SI offers a speedup over non-SI. The primary limitation of existing SI methods lies in their sequential nature. Each SI iteration requires a target forward, and the next iteration only begins after the current one is completed. Therefore, SI with sufficiently slow or inaccurate drafters is slower than non-SI, even if reducing the number of target forwards. However, we observe that the verification of each SI iteration is not inherently sequential and could be parallelized.

Previous works on SI exemplified speedups using drafters that run within 1-5\% of the time (compared to the target model), and iterations of 2-5 draft tokens (namely, $\lookahead \in [2, 5]$). We can calculate the maximum speedup of our proposed method compared to SI by Amdahl's law, as follows. Assume the drafter is perfectly accurate. In that case, our method hides all the verification latency such that the overall end-to-end latency remains only the drafting latency. For drafters of 1-5\% latency and $\lookahead \in [2, 5]$, our proposed parallelization leads to a theoretical speedup of 4x-50x compared to SI.

Figure~\ref{fig:dsi} and Table~\ref{tab:illustration} illustrate potential speedups of our proposed method compared to SI and non-SI, given a drafter of 14\% latency and $\lookahead = 1$. Larger $\lookahead$ values (as often used in practice) yield even larger theoretical speedups.

\begin{figure}[htb]
  \centering
  \includegraphics[width=\textwidth]{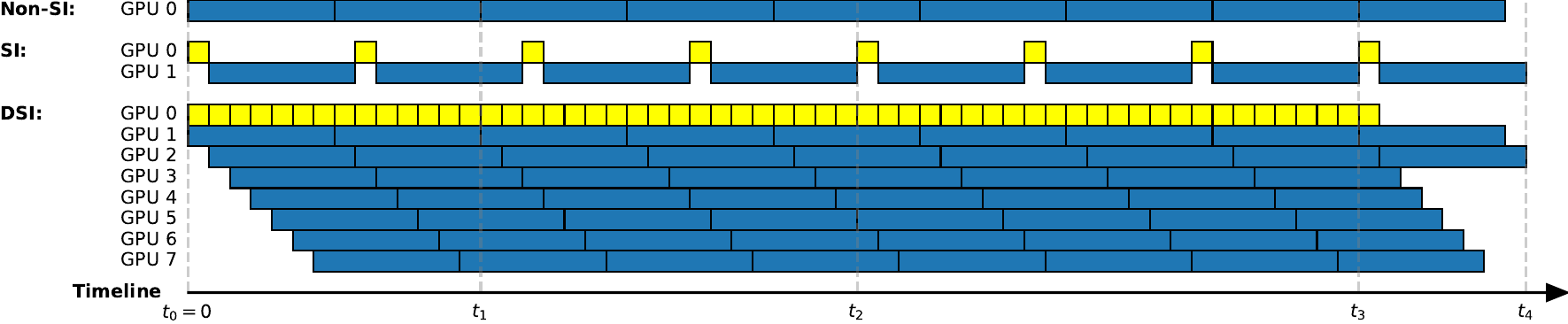}
  \caption{Illustration of the timeline for DSI, SI, and autoregressive inference (non-SI). Blue and yellow mark the forward latency of the target and drafter, respectively. In this example we have $\lookahead=1$, namely, the drafter generates a single token in every yellow square. Non-SI and SI are both sequential: each of their iterations ends with a target forward, and this target forward must be completed before the next iteration can start. In DSI, target forwards are not necessarily blocking as in SI and non-SI. While DSI works for any given number of GPUs ($\ge 2$), here it orchestrates eight GPUs.}
  \label{fig:dsi}
\end{figure}

\begin{table}[htb]
  \footnotesize
  \centering
  \caption{The number of tokens that non-SI, SI, and DSI generate according to Figure~\ref{fig:dsi}. In the worst case, all the draft tokens (yellow in Figure~\ref{fig:dsi}) are rejected. In the best case, all the drafts are accepted. The number of tokens generated by DSI is greater than or equal to the number of tokens generated by SI and non-SI in all cases, at any time, in expectation.}
  \label{tab:illustration}
  \begin{tabular}{lccccc}
    \toprule
      & & $t_1$ & $t_2$ & $t_3$ & $t_4$ \\
    \midrule
    \multirow{3}{*}{Worst case} 
    & non-SI & \textbf{2} & \textbf{4} & \textbf{8} & \textbf{9} \\
    & SI     & 1 & \textbf{4} & 7 & 8 \\
    & DSI    & \textbf{2} & \textbf{4} & \textbf{8} & \textbf{9} \\
    \midrule
    \multirow{3}{*}{Best case}  
    & non-SI & 2 & 4 & 8 & 9 \\
    & SI     & 2 & 8 & 14 & 16 \\
    & DSI    & \textbf{8} & \textbf{26} & \textbf{50} & \textbf{58} \\
    \bottomrule
  \end{tabular}
\end{table}

\subsection{Method overview}\label{subsec:method-overview}

Consider the task of computing $N$ output tokens autoregressively from a target model $f_m$ given a prompt $\xz$. We have a set of faster drafter models, $f_1,\dots,f_{m-1}$, that are all faster than $f_{m}$ (as defined in Assumption~\ref{assumption:fm_slowest}). Our goal is to compute $x_{i} = f_m(x_{\leq i-1})$ for all $i \in [N]$. Appendix~\ref{appendix:method_overview} provides a detailed, step-by-step explanation of \cref{alg:concurrent_informal}.

\begin{algorithm}[htb]
   \small
   \caption{Distributed Speculative Inference (DSI) of $N$ tokens}
   \label{alg:concurrent_informal}
   \begin{algorithmic}[1]
     \REQUIRE{A prompt $\xz$, and $m$ autoregressive models, $f_1, f_2, \dots, f_m$.
     }
     \STATE $v = 1$.
     \STATE \textbf{initiate} $m$ threads $C_{(1)},\dots,C_{(m)}$ such that $C_{(j_1)}$ generates the token $x^{j_1} \sim f_{j_1}(\xz)$ for~all $j_1~\in~[m]$ concurrently. \label{line:init_threads1}
     \STATE \textbf{label} thread $C_{(m)}$ as the current verifier.
     \vspace*{0.1pt}
     \STATE \textbf{ONCE} any thread $C_{J \oplus (j)}$ finishes to generate a token,
     namely, sampled $C_{J \oplus (j)} \new \sim f_{j} \left( C_{J \oplus (j)} \prompt \right)$: \label{line:once}
       \IF{$|J| + 1 < N$}
         \STATE \textbf{initiate} $m$ threads, $C_{J \oplus (j,1)}, C_{J \oplus (j,2)}, \dots, C_{J \oplus (j,m)}$, to generate a token concurrently and respectively from $f_1, f_2, \dots, f_m$. \label{line:init_threads2} 
         \IF{$C_{J \oplus (j)}$ is the current verifier thread} \label{line:verifier}
         \STATE \textbf{terminate} all threads $C_{J \oplus (j')}$ (and their descendant threads) that sampled a different token than~$C_{J \oplus (j)}$.\label{line:term1}
         \STATE let $j^* = \arg\min\limits_{j' \in [m]} \{ j' \mid C_{J\oplus (j')}\new = C_{J \oplus (j)}\new \}$. \label{line:jstar}
         \STATE \textbf{terminate} all threads $C_{J \oplus (j')}$ (and their descendant threads), where $j' > j^*$. \label{line:term2}
         \STATE \textbf{label} $C_{J\oplus (j^*,m)}$ as the current verifier. \label{line:relabel}
         \STATE \textbf{update} $v = v+1$. \label{line:update_v}
         \IF{$C_{J\oplus (j^*,m)}$ has already finished} \label{line:already_finished}
         \STATE go back to step \ref{line:verifier} with $J = J\oplus (j^*,m)$.\label{line:go_back}
         \ENDIF
         \ENDIF
       \ELSIF{the last entry of $J \oplus (j)$ equals $m$ (i.e., $j = m$)}
         \RETURN{$C_{J \oplus (j)} \ret$.} \label{line:return}
       \ENDIF
     \STATE \textbf{end ONCE}
   \end{algorithmic}
 \end{algorithm}

\paragraphnosep{Acceptance rate.} Lines~\ref{line:term1} and~\ref{line:term2} of Algorithm~\ref{alg:concurrent_informal} terminate any thread (and its descendants) that returns a token that does not match the token returned by the current verifier. We say that this draft token is \textit{rejected}. Given a target, a drafter, and an input prompt, we define the \textit{acceptance rate} to be the probability of accepting the draft token. To increase the acceptance rate, we can replace the strict exact-matching (lines~\ref{line:term1},~\ref{line:term2}) with relaxed methods for rejecting drafts. For example, applying the procedures proposed in \citet{leviathan2023fast,chen2023accelerating,miao2024specinfer,timor2025accelerating} increases the acceptance rate while maintaining the distribution of the outputs of the target model (namely, lossless in expectation).

\paragraphnosep{Speculation parallelism (SP).} In essence, DSI offers a new type of task parallelism we name \textit{speculation parallelism (SP)} that orchestrates instances of the target and drafters to overlap in time. \textit{Speculation parallelism degree (SP degree)} is the maximal number of target servers (namely, servers dedicated to computing the target model) used at the same time. DSI parallelizes all of the non-sequential parts of SI. Unlike SI, where verifying the drafts of the current iteration is sequential such that the verification blocks the algorithm from drafting the next iteration, DSI runs verifications on additional threads to hide their latency. In DSI, verifications contribute to the overall latency only when they reject a token. Rejecting a token in DSI triggers a synchronization mechanism terminating threads that received a rejected token (line~\ref{line:term1}), ensuring the output tokens are all accepted. The portion of the inference that DSI parallelizes tends to the expected acceptance rate as the number of generated tokens grows to infinity. For example, in the simple case where we have a single drafter with 80\% acceptance rate, DSI effectively parallelizes 80\% of the work such that the expected speedup is 5x by Amdahl's law for generating $N \to \infty$ tokens.

\paragraphnosep{Lookahead.} While the abstract version of DSI described in Algorithm~\ref{alg:concurrent_informal} takes advantage of a sufficiently large number of servers, in practice we typically have a fixed number of servers. We can deploy DSI on an arbitrary number of servers ($\ge 2$) by selecting a sufficiently large \lookahead hyperparameter, as elaborated in Appendix~\ref{appendix:lookahead}. The \lookahead is defined as the number of draft tokens in every verification task sent to a target server. The \lookahead in Algorithm~\ref{alg:concurrent_informal} is set to~$1$ for simplicity, but can be arbitrarily large. For example, verifying every five draft tokens ($\lookahead=5$) instead of one ($\lookahead=1$) is a standard configuration of SI methods in practice. In DSI, larger \lookahead values decrease the frequency at which verification tasks are sent to the target servers, hence decreasing the required SP degree.
Given an SP degree, the \lookahead must be sufficiently large to satisfy the following inequality, ensuring that verification tasks never wait to be processed by a target server.
\begin{equation}\label{eq:lookahead}
\ceilfn{\frac{(\text{target latency})}{(\text{lookahead}) \cdot (\text{drafter latency})}} \le \text{SP}
\end{equation}
Smaller SP degrees or faster drafters require selecting larger \lookahead values.
For example, given a single drafter of $5\%$ latency and SP $=4$, having $\lookahead=5$ is sufficient. Assuming a single drafter that runs on a single processing unit, the maximum number of required processing units is $1 + \ceilfn{\frac{1}{5 \cdot 0.05}} = 5$.
If more than five processing units are available, we can select a smaller \lookahead value, yielding verification tasks more frequently.
In general, selecting the minimum \lookahead value that satisfies Equation~\ref{eq:lookahead} is the optimal choice, allowing DSI to detect rejections (line~\ref{line:term1}) earlier.
Using an SP degree such that $\text{SP} = \ceilfn{\frac{\text{target latency}}{\text{drafter latency}}}$ reaches the maximum expected speedup, and any larger SP degree cannot speed up the inference because there will be more target servers than verification tasks that can be processed in parallel.


\paragraphnosep{Resource contention.} In practice, resource contention might occur when multiple threads compete for the same hardware resources, such as memory bandwidth, data transfer channels, or CPU cores used for orchestration. By selecting the minimal \lookahead that ensures the required SP degree is supported by the available resources (namely, satisfying Equation \ref{eq:lookahead}), we can guarantee that the algorithm runs efficiently in practice, without significant resource contention, because the verification requests are sent to the target server at different times, and the responses are expected in staggered timings.

\paragraphnosep{Model parallelism (MP).}
DSI introduces a new way to parallelize that differs from tensor parallelism (TP) and pipeline parallelism (PP). DSI can employ TP, PP, or a combination of both TP and PP. To simplify, we say that MP is any such parallelism combination, including TP, PP, or TP+PP. MP speeds up the computation of forwards, and SI reduces the number of target forwards. Their combination (SI+MP) reduces the number of target forwards and accelerates each target forward. DSI further reduces the number of target forwards \textit{contributing} to latency because DSI hides the latency of verification tasks by computing them in parallel.
Below we compare SP and MP, providing a simple example demonstrating that SP outperforms MP given the same computing budget.

In DSI, target forwards contribute latency only if they reject a draft.
Below is a simple example of comparing MP and SP. Given a drafter of 10\% latency, we can set $\lookahead=2$ to allow DSI to run over a single node of only 6 GPUs (5 for the target and 1 for the drafter). Let $a$ be the acceptance rate of the drafter. The expected number of target forwards that DSI eliminates by hiding is $a^2$. For example, for a drafter with an 80\% acceptance rate, only 36\% of the target forwards contribute to latency (in general, $1 - a^{\lookahead}$). Under the same computing budget (MP=5), MP must accelerate the target forwards by 2.78x or more to become faster than DSI. 
However, MP is ineffective for certain hardware setups, model architectures and sizes, while DSI remains effective.

DSI could be naturally combined with MP to accelerate the underlying forwards, requiring no changes to the algorithm, because DSI offers an orchestration algorithm agnostic to the underlying computation of forwards. Since DSI can orchestrate multiple nodes, it unlocks setups with sufficiently large SP degrees so that there is no theoretical tradeoff between DSI and MP. Combining DSI with MP methods can possibly further accelerate the inference in both single- and multi-node setups. All the foundational concepts of such an implementation of DSI are covered in this paper.

\paragraphnosep{KV cache.} We can view DSI as an orchestration algorithm that constructs and verifies token trees on the fly. DSI is decoupled from the underlying computation of forwards, including KV cache management, both in theory and in practice. Each server maintains its own KV cache. The servers collaboratively process a token tree with shared prefixes. Synchronizations occur at every draft rejection.

Efficient KV cache management of token trees has already been developed in SpecInfer, where tree paths can share common prefixes \citep{miao2024specinfer}. Practitioners can apply SpecInfer's KV cache management as-is to achieve the expected speedups reported in this paper. While it might require some engineering effort to implement SpecInfer's KV cache management, it is a solved research problem and has been shown to add negligible latency.

\subsection{theoretical results}\label{sec:theory}

Next, we formally state that DSI (Algorithm~\ref{alg:concurrent_informal}) is strictly lossless, always returning the correct sequence of tokens \( x_1, \dots, x_N \), runs at least as fast as non-SI, and runs faster than both non-SI and SI in expectation. Before presenting our main theoretical results, we outline the assumptions used in the analysis. The proofs are provided in Appendix~\ref{appendix:proofs}.

\begin{assumption}\label{assumption:sampling_time_is_bounded}
We assume the existence of a (universal) constant $c>0$ such that, for any input prompt $\xz$ and model index $j \in [m]$, we have:
\begin{align*}
\wall{\text{computing } f_j \left( \xz \right)} \in (0,c)  ~~\text{ and }~~ \wall{\text{sampling } x \sim f_j \left( \xz \right)} = 0.
\end{align*}
\end{assumption}

\begin{assumption}\label{assumption:fm_slowest}
We assume that for all $j \in [m-1]$, $f_j$ is faster than $f_m$ (denoted $f_j \preceq f_m$) in the following sense $\max_{\xz} \wall{\text{computing } f_j \left( \xz \right)} \le \min_{\xz} \wall{\text{computing } f_m \left( \xz \right)}$.
\end{assumption}

\begin{assumption}\label{assumption:sum}
We assume that $\wall{ \{C_{(j_1,\dots,j_i)}\}^{k}_{i=1} } = \sum^{k}_{i=1} \wall{C_{(j_1,\dots,j_i)}}$. 
\end{assumption}

The first assumption asserts that computing the output of any model takes a non-zero, bounded amount of time, and sampling a token from the output probabilities takes a negligible amount of time. The second assumption asserts that each drafter model runs faster than the target model, for any given input prompt. The third assumption asserts that computing $x^{j_1,\dots,j_k}_k$ takes the time of first computing $ x^{j_1}_{1}$, then $x^{j_1,j_2}_2$, and so forth, up to $x^{j_1,\dots,j_k}_k$, with no delays. 

The following theorem suggests that our method returns tokens from the same distributions as those the target would generate without speculation, and is at least as fast as iteratively applying the target model itself.

\begin{restatable}{theorem}{lossless}\label{theorem:dsi_vs_nonsi}
Under Assumptions~\ref{assumption:sampling_time_is_bounded},~\ref{assumption:fm_slowest} and~\ref{assumption:sum}, Algorithm~\ref{alg:concurrent_informal} returns the same output and runs at least as fast as running the target model itself without speculative inference (SI).
\end{restatable}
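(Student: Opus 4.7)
The plan is to prove both halves of the theorem---identical output distribution and at-most-non-SI wall time---by induction on the depth of what I will call the \emph{verifier chain}. I would define $V_1 := C_{(m)}$, the initially labeled verifier from lines~\ref{line:init_threads1}--\ref{line:verifier}, and inductively $V_{i+1}$ as the thread relabeled as the current verifier on line~\ref{line:relabel} when $V_i$ finishes. The first step would be to verify the chain is well-defined: at the moment $V_i = C_{J \oplus (m)}$ finishes, its siblings $\{C_{J \oplus (j')}\}_{j' < m}$ were initiated simultaneously with $V_i$ on line~\ref{line:init_threads2} (or on line~\ref{line:init_threads1} when $i=1$), and by Assumption~\ref{assumption:fm_slowest} every drafter sibling has already finished; hence the minimum matching index $j^*$ from line~\ref{line:jstar} is always available (with $V_i$ itself as a trivial fallback), and $V_{i+1} := C_{J \oplus (j^*, m)}$ is a well-defined thread.

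For the correctness half, the plan is to prove by induction on $i$ that $V_i$'s prompt equals the non-SI prefix $x_{\le i-1} := \xz \oplus x_1 \oplus \cdots \oplus x_{i-1}$, where $x_j$ denotes the token that $V_j$ sampled. The base case is immediate. For the step, $V_{i+1}$'s parent $C_{J_i \oplus (j^*)}$ is a sibling of $V_i$, hence by induction has prompt $x_{\le i-1}$; by the very definition of $j^*$ it sampled the same token as $V_i$, namely $x_i$. Therefore its $\ret$ is $x_{\le i-1} \oplus x_i = x_{\le i}$, which is $V_{i+1}$'s prompt, and $V_{i+1}$ samples $x_{i+1} \sim f_m(x_{\le i})$---exactly the $(i+1)$-th non-SI token. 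Combined with the termination rules on lines~\ref{line:term1}--\ref{line:term2}, any live thread $C_{J \oplus (m)}$ at depth $N$ triggering line~\ref{line:return} must have its ancestor tokens coinciding with $(x_1, \ldots, x_{N-1})$, so the returned sequence carries the non-SI distribution.

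For the wall-time bound, I would set $\tau_i := \wall{\textnormal{computing } f_m(x_{\le i-1})}$, so Assumption~\ref{assumption:sum} makes non-SI wall time $\sum_{i=1}^{N} \tau_i$. Writing $T_i$ for the completion time of $V_i$, the plan is to show inductively that $T_i \le \sum_{j \le i} \tau_j$. For the step, $V_{i+1}$'s parent (the sibling of $V_i$ with model index $j^* \le m$) completes by time $T_i$ thanks to Assumption~\ref{assumption:fm_slowest}, so $V_{i+1}$ is initiated by time $T_i$, and its own computation of $f_m$ takes $\tau_{i+1}$ (by Assumption~\ref{assumption:sampling_time_is_bounded} sampling adds nothing); hence $T_{i+1} \le T_i + \tau_{i+1}$, possibly strictly smaller when line~\ref{line:already_finished} fires because $V_{i+1}$ has already finished by the time of relabeling. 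Since $V_N$ itself meets the return condition on line~\ref{line:return}, the algorithm halts by time $T_N \le \sum_{j \le N} \tau_j$, matching non-SI.

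The delicate step I expect to be the main obstacle is the watertight version of the matching invariant used at the end of the correctness argument: ruling out the race in which a depth-$N$ thread $C_{J \oplus (m)}$ off the verifier chain finishes and triggers line~\ref{line:return} before the intermediate verifiers have had a chance to terminate a mismatching ancestor on its chain. Handling this cleanly seems to require combining Assumption~\ref{assumption:fm_slowest} with Assumption~\ref{assumption:sum} to argue that any such competing chain's cumulative completion time necessarily trails the corresponding verifiers' completion times, so any mismatching ancestors are caught and terminated before the competing depth-$N$ thread can fire. Once this invariant is in hand, the two inductions above go through mechanically.
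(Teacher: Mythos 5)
Your proposal follows essentially the same route as the paper's proof: an induction along the chain of relabeled verifiers showing that each verifier's prompt is the correct non-SI prefix (so it emits $x_i \sim f_m(x_{\le i-1})$), followed by bounding the chain's cumulative wall time, via Assumptions~\ref{assumption:fm_slowest} and~\ref{assumption:sum}, by $\sum_{i=1}^{N} \wall{\text{computing } f_m\left(x_{\le i-1}\right)}$. The only place you go beyond the paper is the race you flag at the end---an off-chain depth-$N$ thread ending in $m$ triggering line~\ref{line:return} before its mismatching ancestor is terminated; the paper's proof silently asserts that the algorithm terminates only upon computing $C_{J_N}$ and never addresses this, so your instinct that it deserves an explicit argument (positivity of all chain times together with Assumption~\ref{assumption:sum}) is a point in your favor rather than a gap in your plan.
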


\begin{restatable}{theorem}{DSIvsSI}
  \label{theorem:DSI_vs_SI}
Under Assumptions~\ref{assumption:sampling_time_is_bounded},~\ref{assumption:fm_slowest} and~\ref{assumption:sum}, Algorithm~\ref{alg:concurrent_informal} runs at least as fast as SI in expectation.
\end{restatable}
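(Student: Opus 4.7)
The plan is to prove $\E[\wall{\text{DSI}}] \le \E[\wall{\text{SI}}]$ via a coupling argument that yields the stronger pointwise bound $\wall{\text{DSI}} \le \wall{\text{SI}}$ on every sample path; the expectation statement then follows by taking expectations. Couple the random tapes of the two algorithms so that SI and DSI draw identical tokens whenever they call $f_j$ on the same conditioning prefix, for every $j \in [m]$. Because both algorithms use the same (exact-matching, or distribution-preserving) rejection rule, under this coupling they produce identical accept/reject sequences $(A_i)_{i \ge 1}$, and both need the same minimal $K$ with $\sum_{i \le K}(1+A_i) \ge N$ to emit $N$ verified tokens.

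For SI with $\lookahead = 1$, Assumption~\ref{assumption:sum} together with the blocking draft-then-verify structure gives
\begin{equation*}
\wall{\text{SI}} = \sum_{i=1}^{K} (d_i + t_i),
\end{equation*}
where $d_i$ and $t_i$ denote the realized drafter and target forward latencies at iteration $i$; Assumption~\ref{assumption:sampling_time_is_bounded} makes both positive and Assumption~\ref{assumption:fm_slowest} gives $d_i \le t_j$ for all $i,j$.

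For DSI, let $T_k$ be the wall time at which the $k$-th current verifier resolves and $S_k$ its spawn time, with $T_0 := 0$. I would establish the key recursion $T_{k+1} \le T_k + t_{k+1}$ on every coupled path. On a rejection (lines~\ref{line:term1}--\ref{line:term2}), the replacement verifier $C_{\cdots \oplus (m,m)}$ is freshly spawned at time $T_k$, giving equality. On an acceptance, the successor $C_{\cdots \oplus (j^*, m)}$ has been running in parallel since $S_{k+1}$, so $T_{k+1} = \max(T_k, S_{k+1} + t_{k+1})$; a short induction using $d_i \le t_j$ (Assumption~\ref{assumption:fm_slowest}) and the greedy-chain spawning of lines~\ref{line:once}--\ref{line:init_threads2} shows $S_{k+1} \le T_k$, so $T_{k+1} \le T_k + t_{k+1}$ (line~\ref{line:already_finished} covers the degenerate case where the successor has already finished). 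Iterating gives $T_K \le \sum_{i \le K} t_i$.

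Combining the two bounds, $\wall{\text{DSI}} = T_K \le \sum_{i \le K} t_i \le \sum_{i \le K}(d_i + t_i) = \wall{\text{SI}}$ on every sample path, and taking expectations yields the theorem (the gap $\sum_i d_i > 0$ also justifies the strict inequality promised in the abstract). The main obstacle is proving $S_{k+1} \le T_k$ after a history that includes rejections: one must verify by induction that, after each rejection at some iteration $j < k$, the restarted drafter subchain progresses at accumulated pace $\sum_{j < i \le k} d_i$ from $T_j$ while the restarted verifier subchain progresses at pace $\sum_{j < i \le k-1} d_i + t_k$ from $T_j$, so the drafter's head start stays dominated by one target forward via termwise application of Assumption~\ref{assumption:fm_slowest}. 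Carefully tracking which threads are terminated versus promoted at each rejection, and confirming via Assumption~\ref{assumption:sum} that no hidden wait is introduced between spawn and execution, is the technical core; the remainder is coupling bookkeeping.
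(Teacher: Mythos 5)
Your high-level strategy (couple the accept/reject sequences of the two algorithms and compare wall times path by path) matches the paper's, but the DSI side of your bound breaks down at an index conflation. With $\lookahead=1$ the current verifier in Algorithm~\ref{alg:concurrent_informal} is relabeled once per emitted token, so the verifier chain has length $N$, not $K$ (your $K$ is the number of SI iterations, i.e.\ the minimal $K$ with $\sum_{i\le K}(1+A_i)\ge N$, which is strictly smaller than $N$ whenever any draft is accepted). Your recursion $T_{k+1}\le T_k+t_{k+1}$ therefore yields $\wall{\text{DSI}}=T_N\le\sum_{i=1}^{N}t_i$ — which is exactly the non-SI bound of Theorem~\ref{theorem:dsi_vs_nonsi} — and the final step $\sum t_i\le\sum_{i\le K}(d_i+t_i)$ compares sums over different index sets. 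It is false in general: take $N=10$, all drafts accepted so $K=5$, $t_i\equiv 1$, $d_i\equiv 0.1$; then $\sum_{i\le N}t_i=10$ while $\wall{\text{SI}}=\sum_{i\le K}(d_i+t_i)=5.5$. Since SI can be faster than non-SI, dominating non-SI does not dominate SI, so the chain of inequalities does not close.

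The missing idea is precisely the overlap you write down and then discard: on an acceptance, $T_{k+1}=\max(T_k,\,S_{k+1}+t_{k+1})$ with $S_{k+1}\le S_k+d_{k+1}$, so consecutive accepted verifiers resolve only about one \emph{drafter} forward apart, not one target forward apart; weakening this to $T_{k+1}\le T_k+t_{k+1}$ throws away the entire benefit of speculation parallelism. The paper's proof instead proceeds iteration by iteration under the same coupling: after SI's first iteration (time $kt_1+t_2$) both algorithms have committed the same $n+1$ tokens, but DSI has already launched the drafting of the next window at time $kt_1$ rather than $kt_1+t_2$, which is where both the weak inequality and the strict expected gain come from (your attribution of strictness to $\sum_i d_i>0$ is likewise an artifact of the broken comparison). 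To repair your argument you would need to partition the DSI verifier chain into blocks matching SI's iterations and show that within a block of $1+A_i$ accepted verifiers only one target latency $t_i$ plus the drafter latencies lands on the critical path; that is essentially the paper's Proposition~\ref{lemma:dsi_latency} bound $t_1p(N-1)+t_2((1-p)(N-1)+1)$, and the comparison with $\wall{\text{SI}}$ then genuinely requires taking expectations rather than holding termwise.
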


The advantage of Algorithm~\ref{alg:concurrent_informal} lies in its concurrency. The following proposition shows how DSI can accelerate the inference of a given target model using a drafter model that is faster than the target model and returns the correct output with high probability.

\begin{restatable}{proposition}{example}\label{lemma:dsi_latency}
Suppose we have a drafter model $f_1$, a target model $f_2$ and a prompt $x_{\le 0}$. Assume that $f_1$ requires $t_1$ time units to compute each of its outputs, and $f_2$ requires $t_2$ time units, where $t_2 > t_1$. Assume that given the prompt $x_{\le i} = x_{\le 0} \oplus x_1 \oplus \dots \oplus x_i$, the probability that $f_1$ returns the (correct) token $x_{i+1}$ is $p$. Then, the expected time it takes Algorithm~\ref{alg:concurrent_informal} to calculate the correct output is at most $t_1 p(N - 1) + t_2((1-p)(N-1) + 1)$ time units, compared to the $t_2 N$ time units required if we were to compute $f_2$ without speculative inference.
\end{restatable}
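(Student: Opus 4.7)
The plan is to specialize Algorithm~\ref{alg:concurrent_informal} to $m=2$ and to directly track the finish time of the sequence of threads successively labeled as the ``current verifier''. Since the termination condition on line~\ref{line:return} fires exactly when the $N$-th verifier completes, this finish time is the end-to-end latency of the algorithm.

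First I would identify the verifier chain $C_{K_1}, C_{K_2}, \dots, C_{K_N}$. Starting from $K_1 = (2)$ (line~\ref{line:init_threads1}), write the $k$-th verifier as $C_{K_k} = C_{J_k \oplus (2)}$ with $J_k$ the prefix of length $k-1$. Inspecting lines~\ref{line:verifier}--\ref{line:relabel}, at step $k$ the algorithm either (i)~accepts with $j^{*}=1$ and labels $C_{K_{k+1}} = C_{J_k \oplus (1,2)}$, or (ii)~rejects with $j^{*}=2$ and labels $C_{K_{k+1}} = C_{J_k \oplus (2,2)}$. In either case $|K_{k+1}| = k+1$ and the last entry is $m=2$, so the algorithm returns exactly when $C_{K_N}$ finishes.

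Next I would set up a recursion for the finish time $T_k$ of $C_{K_k}$. By Assumption~\ref{assumption:sum} together with Assumption~\ref{assumption:sampling_time_is_bounded}, the start time of any thread $C_{(j_1,\dots,j_k)}$ equals $\sum_{i=1}^{k-1} c(j_i)$, where $c(1) = t_1$ and $c(2) = t_2$. A direct calculation in each branch then yields
\begin{equation*}
T_1 = t_2, \qquad T_{k+1} = T_k + A_k\, t_1 + (1-A_k)\, t_2,
\end{equation*}
where $A_k \in \{0,1\}$ is the indicator that the $k$-th draft is accepted. In the accepted branch the new verifier $C_{J_k \oplus (1,2)}$ was already spawned at line~\ref{line:init_threads2} by its parent drafter $C_{J_k \oplus (1)}$ (which ran in parallel with $C_{K_k}$), so it completes only $t_1$ time units after $T_k$. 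In the rejected branch the whole drafter subtree is killed at line~\ref{line:term1} and the new verifier $C_{J_k \oplus (2,2)}$ was freshly spawned by $C_{K_k}$ at time $T_k$, completing $t_2$ time units later.

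Given the recursion, the conclusion is immediate. By the hypothesis on $f_1$, each $A_k$ is Bernoulli$(p)$, so linearity of expectation gives
\begin{equation*}
\mathbb{E}[T_N] = t_2 + (N-1)\bigl(p\, t_1 + (1-p)\, t_2\bigr) = t_1\, p(N-1) + t_2\bigl((1-p)(N-1) + 1\bigr),
\end{equation*}
which is exactly the claimed bound; moreover $T_N \le N t_2$ pointwise since $t_1 \le t_2$, confirming the comparison with running $f_2$ alone. The main obstacle is the bookkeeping in the recursion step: identifying in each branch which ancestor thread spawned the new verifier and at what time, and in particular arguing in the acceptance branch that the designated verifier had been running concurrently with the old one and therefore only needs an additional $t_1$ (rather than a fresh $t_2$) to finish. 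The rest is unrolling the recursion and taking expectations.
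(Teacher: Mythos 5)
Your proof is correct and takes essentially the same approach as the paper's: both track the chain of successively relabeled verifier threads and show that the final one finishes at time $\sum_{i=1}^{N-1} c(j_i) + t_2$ with $c(j_i)=t_1$ for accepted positions and $t_2$ for rejected ones, then take expectations over the Bernoulli$(p)$ acceptances. The only difference is organizational --- you unroll a per-token recursion $T_{k+1}=T_k+A_k t_1+(1-A_k)t_2$, whereas the paper groups tokens into runs ending at each rejection and sums the run times --- and your verification that the accepted-branch verifier was pre-spawned by the concurrently running drafter (hence costs only $t_1$ extra) is exactly the cascading-verification argument the paper makes for each run.
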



\section{Empirical results}\label{sec:experiments}

DSI can orchestrate any fixed number of processors ($\ge 2$) by selecting a sufficiently large \lookahead value. While the theoretical guarantees in section \ref{sec:theory} hold for both single- and multi-node setups, our experiments are confined to single-node scenarios with at most eight GPUs.

Configuring DSI for any given number of GPUs ($\ge 2$), target model, and drafter, requires calculating the SP degree by selecting the allocation of the available GPUs, then selecting the \lookahead to be the minimal number satisfying Equation~\ref{eq:lookahead}.
In its simplest nontrivial setup, DSI orchestrates a single node with two GPUs, implementing a target server on one GPU and a drafter server on the other. More advanced setups involve more GPUs, potentially on other nodes, or combine DSI with other parallelism techniques so that servers can utilize multiple underlying GPUs. For example, a node with eight GPUs can run eight servers, each using one GPU, or four servers, each using two GPUs, employing model parallelism (MP) with tensor parallelism (TP) or pipeline parallelism (TP).
To maximize the expected speedup, we select the minimal \lookahead value that satisfies Equation~\ref{eq:lookahead} so that verification tasks never wait to be processed by a target server.
Some models require more than one GPU to avoid memory offloading to slower memories (like the host's CPU memory or hard disk). For example, given seven GPUs and a target model that requires two of the given GPUs to run without offloading (namely, MP~$\ge~2$), the SP degree is at most three, assuming that the drafter can run on a single GPU. With an SP degree of three, we will select the \lookahead to be the minimal number satisfying Equation \ref{eq:lookahead}. If the drafter forwards take 5\% of the target forwards, the ratio between their latencies is 20, hence, the minimum \lookahead value guaranteeing no waiting is 7.


Since our experiments focus on single-node setups, we implemented DSI as a multithreading system.
We implemented DSI in Python, using Python's built-in thread pool to manage operating system (OS) threads. These OS-level threads share CPU resources similarly to other real-world multi-threaded systems. This ensures that real-world thread management latencies, such as context switching, thread creation, and scheduling delays, are fully incurred in the experiments. For multi-node setups, implementing DSI as a multiprocessing system with a Message Passing Interface (MPI) would be more appropriate than multithreading.

In both single- and multi-node setups, DSI can employ a thread pool design pattern, where verification tasks are sent to a pool of servers computing the target model. The size of this target pool is, by definition, the SP degree. Our implementation is based on a thread pool of targets and a single drafter server. In all our experiments, DSI has an SP degree of less than or equal to seven and one drafter server, where each server employs a single GPU.

Our implementation of DSI and the code for running the experiments are agnostic to the underlying hardware, and have been open-sourced, employing high software development standards with thousands of tests, ensuring complete reproducibility of the reported results and enabling further research.

Measuring the actual speedups of DSI for a node with eight GPUs requires access to such a node. Due to budget constraints, instead of a node with eight GPUs, we only had access to one GPU. To evaluate DSI over a node with eight GPUs without access to such hardware, we adjusted the DSI implementation accordingly. While DSI incurs all the real-world latencies typical of multithreading systems (e.g., latencies of managing OS threads), each call to compute the forward pass of an LM was replaced by a \wait command. The \wait command blocks the thread for a duration that matches the actual latency.

To ensure realistic \wait times, we conducted a separate experiment to estimate the Time to First Token (TTFT) and Time Per Output Token (TPOT) for each model and dataset. These TTFT and TPOT values were then used to set the \wait times in the main experiment.
To estimate the acceptance rate for each combination of \comb{target, drafter, dataset}, we performed another separate experiment and plugged in the approximated acceptance rate in the main experiment.
Appendices~\ref{appendix:details-ttft-tpot} and \ref{appendix:details-acceptance-rate} provide detailed explanations of the independent experiments to estimate the TTFT, TPOT, and acceptance rate.

The results of our main experiment, detailed in Table~\ref{table:dsi_speedups_offtheshelf}, affirm Theorems~\ref{theorem:dsi_vs_nonsi} and~\ref{theorem:DSI_vs_SI}, demonstrating that DSI outperforms SI~\citep{leviathan2023fast,chen2023accelerating} in practical settings across various models and well-known datasets. Overall, DSI consistently outperforms SI across all models and tasks. Table~\ref{table:dsi_speedups_offtheshelf} also specifies the latency and acceptance rate estimates used in each configuration.

\begin{table}[htb]
  \centering
  \caption{DSI speedups over SI for various off-the-shelf target/drafter pairs. We observe that DSI outperforms SI consistently across all models and tasks.}
  \scriptsize
  \setlength{\tabcolsep}{2pt}
  \begin{tabular}{llcccccc}
  \toprule
  \thead{Target} & 
  \thead{Drafter} & 
  \thead{Dataset} & 
  \thead{Target\\ Latency\\ (ms)} & 
  \thead{Drafter\\ Latency\\ (ms)} &
  \thead{Drafter\\ Latency\\ (\%)} & 
  \thead{Acceptance \\ Rate \\ (\%)} & 
  \thead{Speedup \\ DSI vs. SI} \\
  \midrule
  \texttt{Starcoder-15B} & \texttt{Starcoder-168M} & HumanEval & 20.6 & 6.8 & 32.3 & 93 & 1.92x \\
  \texttt{Starcoder-15B} & \texttt{Starcoder-168M} & MBPP & 21.0 & 6.8 & 32.9 & 90 & 1.66x \\
  \texttt{Phi3-14B} & \texttt{Phi3-4B} & Alpaca & 49.6 & 33.4 & 67.4 & 87 & 1.60x \\
  \texttt{Phi3-14B} & \texttt{Phi3-4B} & HumanEval & 52.1 & 34.0 & 65.3 & 95 & 1.41x \\
  \texttt{Phi3-14B} & \texttt{Phi3-4B} & CNN-DM & 52.4 & 34.6 & 66.0 & 93 & 1.39x \\
  \texttt{Phi3-14B} & \texttt{Phi3-4B} & MBPP & 52.2 & 34.3 & 65.8 & 94 & 1.37x \\
  \texttt{Vicuna-13B} & \texttt{Vicuna-68M} & CNN-DM & 37.7 & 2.5 & 6.5 & 63 & 1.47x  \\
  \texttt{Vicuna-13B} & \texttt{Vicuna-68M} & Alpaca & 33.3 & 2.5 & 7.4 & 58 & 1.41x  \\
  \texttt{Vicuna-7B} & \texttt{Vicuna-68M} & CNN-DM & 29.4 & 2.5 & 8.4 & 67 & 1.29x \\
  \texttt{Vicuna-7B} & \texttt{Vicuna-68M} & Alpaca & 26.0 & 2.5 & 9.5 & 59 & 1.70x \\
  \bottomrule
  \end{tabular}
  \label{table:dsi_speedups_offtheshelf}
\end{table}

The reported speedup of DSI relative to SI (``Speedup DSI vs. SI'' in Table~\ref{table:dsi_speedups_offtheshelf}) is the ratio between their estimated end-to-end latencies (wall time), including prefilling and decoding latency, but excluding tokenization latency. The end-to-end latency is estimated as follows: we generate 50 tokens using each target-drafter pair on each dataset, employing real-world forward latencies and acceptance rate values from independent experiments (elaborated below). Each combination of \comb{target, drafter, dataset} is run on multiple \lookahead values (specifically, 1, 5, and 10, because this range has been shown in previous works on SI to be effective). For the DSI run, we further restrict the \lookahead values to ensure we only consider configurations for which DSI could have been deployed on a single node with up to eight GPUs, assuming the drafter runs on a single GPU. That is, we only use $\lookahead \in \set{1, 5, 10}$ if this \lookahead value satisfies Equation~\ref{eq:lookahead} for $\text{SP}=7$.

In all our experiments, the models and datasets were downloaded from the \href{https://huggingface.co/models}{Hugging Face Hub} and used as-is. We used four well-established datasets to estimate forward latencies and acceptance rates, spanning various tasks: text summarization using CNN~Daily~Mail~\citep{hermann2015teaching}; instruction-following using Alpaca~\citep{taori2023alpaca}; code generation using MBPP and HumanEval \citep{austin2021program,chen2021evaluating}. Appendices~\ref{appendix:models}~and~\ref{appendix:datasets_and_prompts} provide a complete description of the models, datasets, and examples of prompts.

\subsection{Ablation via offline simulation}\label{subsec:ablation}

The main experiment above (Table~\ref{table:dsi_speedups_offtheshelf}) can be viewed as an ``online'' experiment because it employs thread pools and measures the overall wall time including real-world latencies of multithreading systems. This section presents a complementary ``offline'' experiment that simulates the inference algorithms by simply summing the forward latencies without thread pools, assuming zero multithreading latencies. The offline experiment is important for two reasons: (i) it decouples the implementation details of DSI from the theoretical analysis, and (ii) it allows us to explore a much larger space of configurations within a constrained computational budget.

Figure~\ref{fig:dsi_speedups} presents the results of this offline simulation, measuring the pairwise speedups (or slowdowns): SI compared to non-SI, DSI compared to SI, and DSI compared to non-SI.
Since SI is slower than non-SI in some configurations, we have included Figure~\ref{fig:dsi_speedups}(d) as an additional comparison that shows DSI speedups relative to the faster of the two algorithms—SI or non-SI—for any given configuration. Figure~\ref{fig:dsi_speedups}(d) helps identify configurations where DSI achieves the highest speedup. It demonstrates that, unlike SI, our method introduces no slowdown compared to non-SI and consistently accelerates inference.

As shown in Figure~\ref{fig:dsi_speedups}(a), to achieve a speedup with SI compared to non-SI, the acceptance rate of the drafter must at least match the latency of the drafter model, which corresponds to the non-pink region in the figure. This means that the SI algorithm cannot speed up the inference if the acceptance rate of the drafter is not sufficiently high for a given latency, corresponding to the pink region in the figure. Conversely, in Figure~\ref{fig:dsi_speedups}(b), we observe that DSI consistently speeds up inference time, regardless of the latency and acceptance rates of the drafter. This provides our method with much greater flexibility and robustness when selecting drafters for a given target model. In Figure~\ref{fig:dsi_speedups}(c), we observe that DSI is faster than non-SI for all configurations for which non-SI is faster than SI. Finally, to obtain a comprehensive view of the inference speedup achieved by DSI, in Figure~\ref{fig:dsi_speedups}(d), we compare the performance of DSI with the minimum runtime for any configuration between SI and non-SI.

The heatmaps represent millions of data points, where each point corresponds to a different configuration. Since offline simulations are insensitive to the real-world latencies of multithreading systems (e.g., context switching), an offline simulation for a particular configuration could be run in parallel with other offline simulations. This approach of parallelizing the experiments—rather than running them sequentially—makes it feasible to scale up the number of configurations explored within a constrained computational budget and provide comprehensive heatmaps of the expected speedups. In contrast, for online experiments, each of the millions of configurations represents a distinct online run that necessitates a separate environment.


\begin{figure}[htb]
  \centering
  \begin{tabular}{c@{~~}c@{~~}c@{~~}c@{~~}c}
    
    \includegraphics[width=0.23\textwidth]{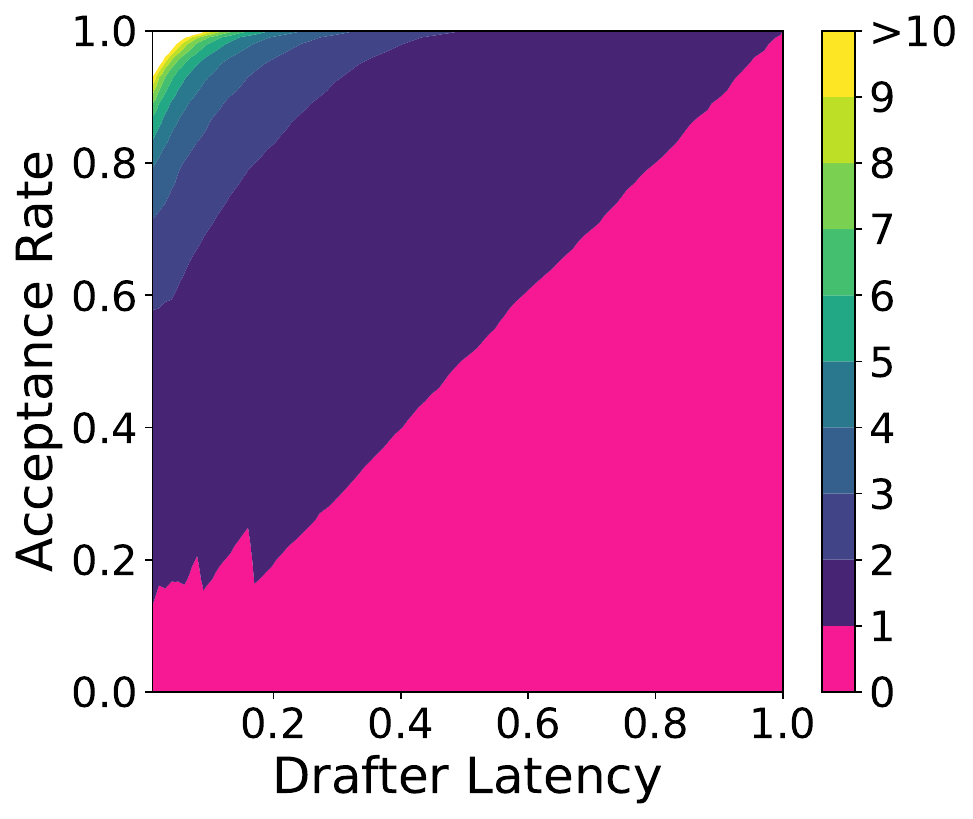} &
     \includegraphics[width=0.23\textwidth]{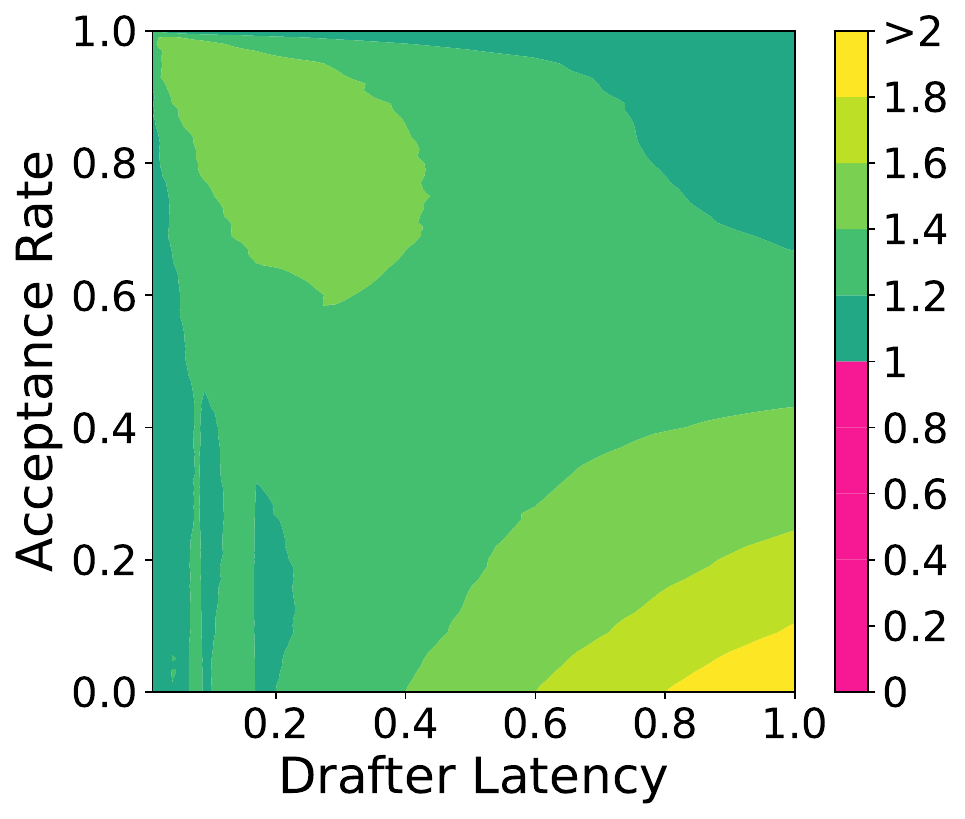}  &
     \includegraphics[width=0.23\textwidth]{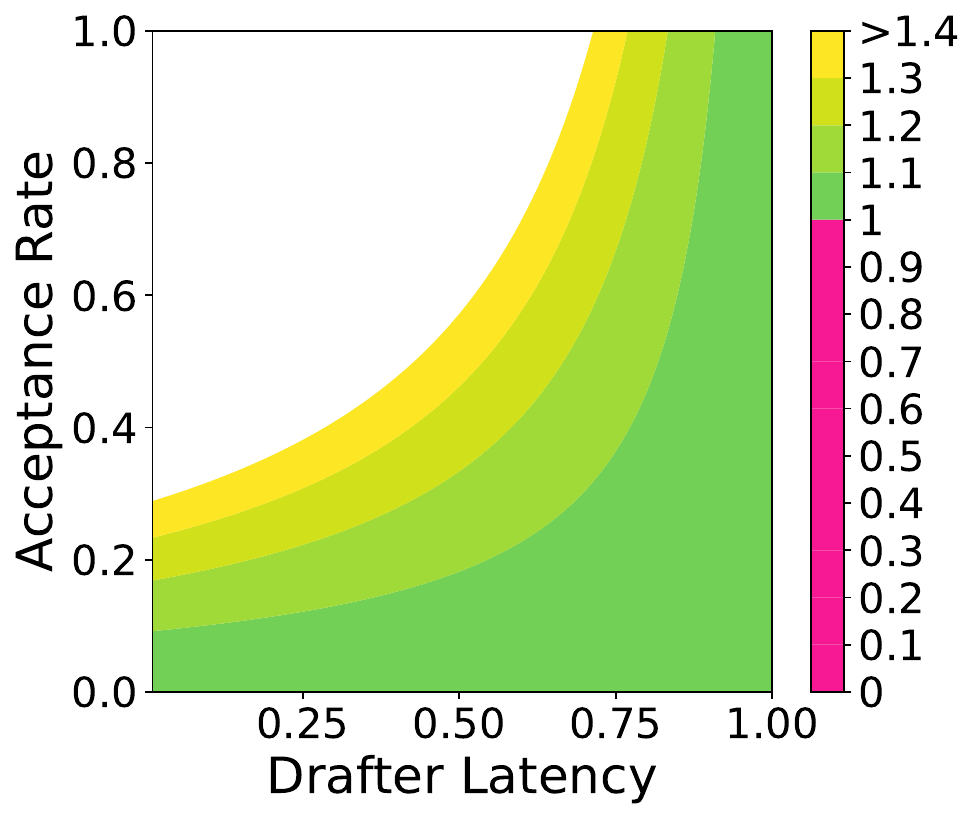} &
     \includegraphics[width=0.23\textwidth]{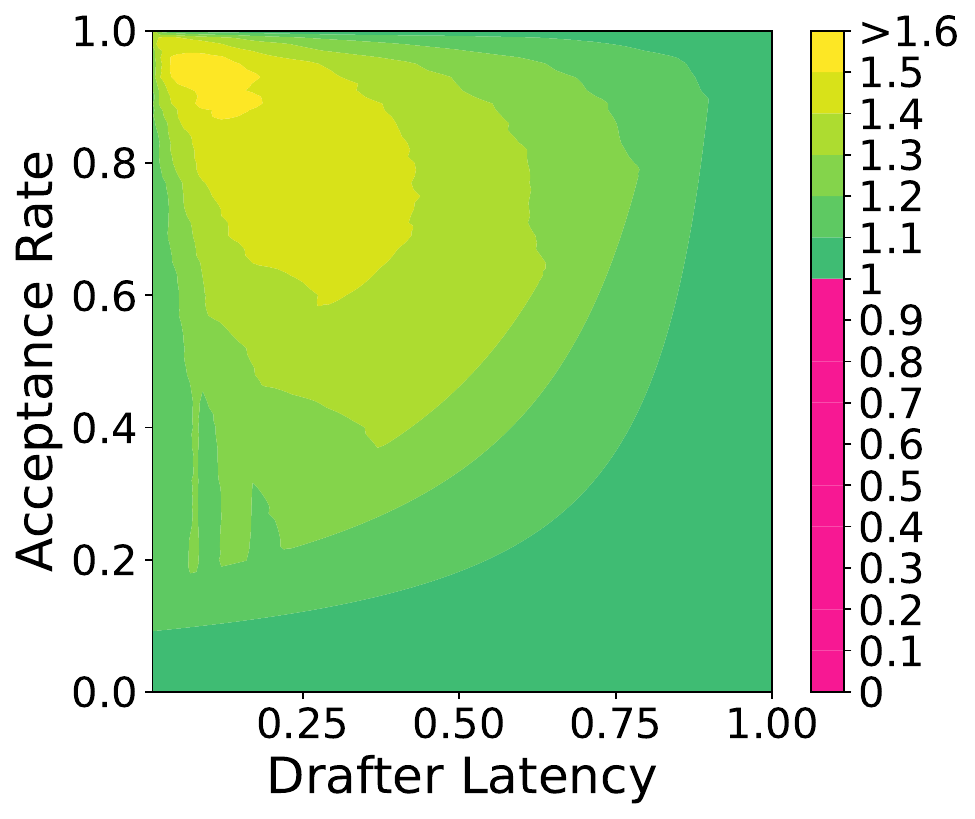} \\
     {\small {\bf (a)} non-SI/SI} &
     {\small {\bf (b)}
     SI/DSI} &
     {\small {\bf (c)} non-SI/DSI } &
     {\small {\bf (d)} $\frac{\min(\text{SI}, \text{non-SI})}{\text{DSI}}$}
  \end{tabular}
  \caption{
  Expected pairwise speedups (or slowdowns) of DSI, SI, and non-SI. Each heatmap is labeled ``X/Y'' and plots the ratio between the run time of algorithm X and the run time of algorithm Y.
  The run time of each algorithm is computed by summing the latencies of all the forward passes and intentionally ignoring additional real-world latencies of multithreading systems like context switching, allowing us to decouple the implementation details from the theoretical analysis.
  {\bf (a)}:~SI is slower than non-speculative inference (non-SI) when the drafter is either slow or inaccurate enough (pink marks slowdowns). {\bf (b, c, d)}: DSI is faster than speculative inference (SI) and non-speculative inference (non-SI) for all configurations of non-zero acceptance rate. DSI is never slower than either SI or non-SI for all configurations. {\bf (d)}: DSI is up to 1.6x faster than the baseline algorithm, where the baseline is the faster between SI and non-SI for each configuration.
  }
  \label{fig:dsi_speedups}
\end{figure}

Appendix~\ref{appendix:experiments-details} provides additional implementation details about the experiments. We open-source the code for all the simulations.

\section{Related work}

Research on SI has expanded the framework, including dynamically controlling the number of draft tokens per iteration---a technique widely adopted in practice \citep{mamou2024accelerating,liu2024optimizing,gante2023assisted}---and exploring various other directions \citep{li2024eagle,cai2024medusa,sun2024spectr,liu2024online,zhou2024distillspec,zafrir2024fastdraft,narasimhan2025faster,bachmann2025judge}. However, in prior works, computing target forward passes remains a blocking operation, limiting the algorithm from processing tokens in later positions and leaving the fundamental limitation of SI as a sequential algorithm unaddressed.

PEARL is a recent\footnote{Their work was released on arXiv and GitHub after ours and is published concurrently at ICLR.} extension to SI, demonstrating that drafting in parallel to verifying speeds up non-SI and vanilla SI by up to 4.43x and 1.5x, respectively \citep{liu2025pearl}. Their empirical results highlight the advantage of DSI in breaking the draft-then-verify sequential nature of SI. However, PEARL suffers from a fundamental limitation: it remains a sequential algorithm because it can only process tokens of the next SI iteration, unlike DSI, which can process tokens of any future iteration. Their algorithm employs a heuristic (controlling whether to verify the first draft token of every iteration) and has no theoretical guarantees to speed up SI or non-SI. In fact, PEARL is slower than non-SI if the drafters are too slow or inaccurate, unlike DSI. Since PEARL cannot orchestrate more than one instance of the target model and one instance of the drafter, it offers limited scalability to hardware setups, unlike DSI, which can orchestrate an arbitrary number of GPUs ($\geq 2$). As a result, PEARL underutilizes the available hardware unless it runs with $\text{lookahead} = \frac{\text{target latency}}{\text{drafter latency}} > 1$, and is therefore strictly slower than DSI with a smaller lookahead, in expectation.

For other related work beyond SI, see Appendix \ref{appendix:other-related-work}.

\section{Discussion}\label{sec:discussion}

This work proposes a method to reduce the run time of speculative inference algorithms by taking advantage of an arbitrary number of additional multiple processing units (e.g., GPUs). We have shown that despite the wide adoption of SI algorithms, they can end up slowing the inference of language models in various practical settings, when the drafters are insufficiently accurate or fast. We showed that by taking advantage of at least one additional GPU, we can design a speculatively inference algorithm that provably reduces the inference time of both non-SI and SI algorithms. Our simulations affirm our theory, indicating significant speedups over both SI and non-SI for all possible configurations given a single node with up to eight GPUs.
In essence, this work paves the way to additional SI algorithms that can orchestrate multiple processing units at the same time via \textit{speculation parallelism} (SP).

We introduce \textit{distributed speculative inference} (DSI) and show it is faster than SI and non-SI for all possible configurations by theoretical analysis and experiments. 
While the theoretical guarantees hold for both single- and multi-node setups, our experiments focus on single-node scenarios with up to eight GPUs with an SP degree~$\le 7$.
Due to budget constraints, we adjusted our implementation of DSI to simulate an access to such a node rather than running on a physical node with eight GPUs.
Nevertheless, the empirical results are realistic because DSI is implemented as a multithreading system, incurring all real-world latencies of such systems (e.g., context switching), and all the real-world GPU-related latencies are based on independent experiments accessing a GPU.


\paragraphnosep{Future Work.} DSI's key strengths lie in its use of \textit{speculation parallelism} to reduce latency through parallelized verification, and its ability to achieve lossless speedups without modifying model architectures. Building on these, future work should focus on evaluating DSI with LMs that require multiple GPUs to avoid memory offloading (namely, MP degree $> 1$). While most state-of-the-art models can run on a single GPU (via compression, in a lower precision, etc.), we do see a trend towards even larger LMs. Since larger LMs are often slower, DSI's parallelism could be particularly effective. Although multi-node inference is not yet a common setup, testing DSI in realistic multi-node environments, could unlock its potential despite communication latencies.


\subsubsection*{Acknowledgments}
We thank Intel Labs for funding this research.

This work was partially funded by the Israel Science Foundation (ISF grant 3698/21). Additional support was provided by a research grant to David Harel from Louis J. Lavigne and Nancy Rothman, the Carter Chapman Shreve Family Foundation, Dr. and Mrs. Donald Rivin, and the Estate of Smigel Trust.
Michal Gordon-Kiwkowitz has recently moved to the School of Computer Science, Faculty of Exact Sciences at Holon Institute of Technology and the Luddy School of Informatics at Indiana University Bloomington.
Part of Tomer Galanti's contribution was conducted while at MIT.

\bibliography{iclr2025_conference}
\bibliographystyle{iclr2025_conference}

\appendix

\section{Other related work}\label{appendix:other-related-work}

Beyond SI, other recent efforts to reduce the inference latency of LMs can be classified into two main categories by their approach to hardware utilization. The first category focuses on accelerating the inference by using more computing power. It includes data parallelism (DP) and model parallelism (MP) methods of different types, such as pipeline parallelism (PP), tensor parallelism (TP) \citep{narayanan2021efficient}, context parallelism \citep{li2023sequence, yang2024context}, and expert parallelism \citep{rajbhandari2022deepspeed-moe}. Such partitioning over multiple processors can speed up memory-bounded inference setups, for example, by avoiding memory offloading. They are often effective in increasing the inference throughput and supporting larger context lengths. However, the typical LM architectures necessitate autoregressive inference. Such inference is inherently sequential with only limited (if any) MP opportunities, depending on the LM architecture. Since the portion of the inference that could be parallelized is small (if any) in typical LMs, the potential speedup of MP methods is limited, by Amdahl's law \citep{amdahl1967validity,rodgers1985improvements}. Hence, while DP methods can increase the inference throughput, they remain inherently sequential. Furthermore, as the parallelisms above shard over more processors, the communication overhead increases, which can lead to diminishing returns, limiting the number of processors they can effectively utilize.

The second category focuses on making LMs use less computing resources or better utilize the same resources. This includes post-training compression through pruning (e.g., \citep{frantar2023sparsegpt,sun2024a,ma2023llmpruner}), knowledge distillation (e.g., \cite{hinton2015distilling,gu2024minillm}), quantization (e.g., \citep{hubara2018quantized, frantar2023optq, lin2023awq, yao2024exploring, dettmers2022gpt3}), low-rank factorization (e.g., \citep{hsu2022language,xu2023tensorgpt}), early exiting (e.g., \citep{schuster2022confident, kim2023speculative, Elbayad2020Depth-Adaptive, bapna2020controlling, schuster-etal-2021-consistent}), and alternative architectures (e.g., \citep{cai2024medusa, li2024eagle, zhang2024lowrank, zhang2024hedgehog, xiao2024efficient, gu2024mamba}). While these solutions are useful in practice, they often require modifications to the model architecture, changes to the training procedure and re-training of the models, without guaranteeing identical outputs. Despite reducing the inference time, these methods often have a significant drawback of degrading the output quality. There are also solutions that preserve the output quality, such as kernel optimizations \citep{dao2022flashattention,dao2024flashattention}, but they highly depend on the hardware and therefore are not always available or even feasible.

\section{Extended preliminaries}\label{appendix:extended_preliminaries}

\textbf{Autoregressive language models (LMs)} are deterministic, real-valued multivariate functions. An input to an LM is a sequence of vectors of dimension $\nvocab$. We call these vectors \textit{tokens}, and the sequence a \textit{prompt}. LMs output a real-valued vector of dimension $\nvocab$, also known as the \textit{logits}. Since prompts may vary in length, we simplify the notation of the \textit{forward pass} as follows: $f : \mathbb{R}^{* \times \nvocab} \to \mathbb{R}^{\nvocab}$.

\textbf{Self-Attention LMs} are LMs with a pre-defined context length $\nctx$ \citep{vaswani2017attention}.
Hence, we represent the forward pass of such LMs in the following manner: $f : \mathbb{R}^{\nctx \times \nvocab} \to \mathbb{R}^{\nvocab}$. For example, GPT-2 and GPT-3 are Transformers with $\nvocab=50257$, and context lengths $\nctx=1024$ and $\nctx=2048$, respectively \citep{radford2019language,brown2020language}.
In this paper, all LMs are Self-Attention ones with pre-trained (frozen) parameters.

We extend the prompt notation such that prompts can have length $l \le \nctx$. Self-Attention LMs handle prompts of length $l < \nctx$ by starting the input sequence with a prefix of $\nctx - l$ tokens, followed by the $l$ given tokens. LMs ignore the prefix, either by zeroing (masking) the Attention parts corresponding to the prefix or by left-padding with dedicated tokens. In this paper, prompts of length $l < \nctx$ are the non-masked, non-padded suffix of the input sequence of length $\nctx$.

\textbf{Generating the next token} is the primary application of autoregressive LMs. This process consists of two steps: computing the forward pass of the LM and then selecting the next token based on the output. The selection can be deterministic or non-deterministic.
Non-deterministic selection procedures apply the softmax function after the forward pass of LMs and sample from the resulting probability vector:
\begin{equation}\label{def:lm_softmax}
  \text{softmax} : \mathbb{R}^{\nvocab} \to [0, 1]^{\nvocab} \text{ such that the entries sum to 1}.
\end{equation}
For convenience, we denote the output probability vector by $f \tuple{x_{\le i}}$:
\begin{align}\label{def:sampling_from_lm}
  x_{i+1} \sim f \tuple{x_{\le i}} 
  := \text{softmax} \tuple{f \tuple{x_{\le i}}} := \text{softmax} \tuple{f \tuple{x_{\leq 0} \oplus x_1 \oplus \dots \oplus x_{i}}},
\end{align}
where $a\oplus b = (a,b)$ is the concatenation of the vectors $a$ and $b$ and $x_{\leq i} := x_{\leq 0} \oplus x_1 \oplus \dots \oplus x_{i}$.
For deterministic selection procedures, composing monotonic functions, such as softmax, is usually unnecessary.
For example, the most likely next token is the $\arg\max$ of both the logits and the output of the softmax. Still, for convenience, we assume that LMs always output probability vectors. The sampling process in  \eqref{def:sampling_from_lm} is either deterministic (i.e., $x_{i+1}$ is a token with maximal probability) or random (achieved by randomly selecting $x_{i+1}$ from the distribution $\text{softmax}\tuple{f \tuple{x_{\le i}}}$).

\section{Step-by-step method overview}\label{appendix:method_overview}

Consider the task of computing $N$ output tokens autoregressively from a target model $f_m$ given a prompt $\xz$. We have a set of faster drafter models, $f_1,\dots,f_{m-1}$, that are all faster than $f_{m}$ (as defined in Assumption~\ref{assumption:fm_slowest}). Our goal is to compute $x_{i} = f_m(x_{\leq i-1})$ for all $i \in [N]$. To achieve this, we initiate $m$ threads, $C_{(1)},\dots, C_{(m)}$ (line~\ref{line:init_threads1} in Algorithm ~\ref{alg:concurrent_informal}). Each thread, denoted as $(j_1)$, is responsible for computing $x^{j_1}_{1} = f_{j_1}(\xz)$. Once a thread, $C_{(j_1)}$, finishes computation, we instantiate $m$ new threads, $C_{(j_1,j_2)}$, to calculate $x^{j_1,j_2}_{2} = f_{j_2}(\xz \oplus x^{j_1}_{1})$ for all $j_2 \in [m]$. In general, once we compute $x^{j_1, \dots, j_{r-1}}_{r-1}$, we initiate $m$ new threads, $C_{(j_1,\dots,j_{r-1},1)},\dots,C_{(j_1,\dots,j_{r-1},m)}$, to compute $x^{j_1, \dots, j_{r}}_{r} = f_{j_r}(x_{\leq 0} \oplus x^{j_1}_1 \oplus \dots \oplus x^{j_1, \dots ,j_{r-1}}_{r-1})$ for all $j_{r} \in [m]$. This is captured in lines~\ref{line:once} and~\ref{line:init_threads2}.

Once $C_{(m)}$ completes its computation and provides the correct value of the first output token $x^m_{1} = x_{1}$, we can verify which other threads, $C_{(j_1)}$, have accurately computed $x_{1}$. Any thread $C_{(j_1)}$ where $x^{j_1}_{1} \neq x_{1}$ is immediately terminated along with its descendant processes (line~\ref{line:term1}). For each $j_1 \in [m]$ that correctly computed $x^{j_1}_{1} = x_{1}$, we continue with computing $x^{j_1,j_2} = f_{j_2}(x_{\leq 0} \oplus x^{j_1}_{1})$ for all $j_2 \in [m]$. However, since all threads are computing the same set of tokens, we terminate all but the one corresponding to the smallest value of $j_1$ that satisfies $x^{j_1}_{1} = x_{1}$ (line~\ref{line:term2}). In essence, $C_{(m)}$ serves as a verifier, identifying drafters that miscalculated the initial part of the autoregressive computation. Once we retain one valid $j_1$, we relabel $C_{(j_1,m)}$ as the new verifier thread. We know that since $C_{(j_1)}$ returned the correct token $x^{j_1}_1=x_1$ and $x_2 = f_m(x_{\le 1})$, the output of $C_{{(j_1,m)}}$ must be correct. When that thread finishes, among the remaining threads, $C_{(j_1,j_2)}$, we terminate those that miscalculated $x_2 = x^{j_1,m}_2$ (line~\ref{line:term1}) and keep only the one with $x^{j_1,j_2}_{2} = x^{j_1,m}_{2} = x_2$, whose index $j_2$ is minimal (line~\ref{line:term2}). We continue this process until the output $x^{j_1,\dots,j_{N-1},m}$ is obtained from the last verifier thread $C_{(j_1,\dots,j_{N-1},m)}$. The process of relabeling verifier threads and terminating irrelevant threads is outlined in lines~\ref{line:term1},~\ref{line:term2}, and~\ref{line:relabel}. Line~\ref{line:already_finished} considers the case where the newly labeled thread may have already finished. If so, in line~\ref{line:go_back}, we return to line~\ref{line:verifier} with the new verifier thread.

\section{Lookahead}\label{appendix:lookahead}
We can deploy DSI on an arbitrary number of servers by selecting a sufficiently large \lookahead hyperparameter, as explained below. Line~\ref{line:init_threads2} of Algorithm \ref{alg:concurrent_informal} invokes a new process to compute the target model $f_m$ immediately after generating any token (except for tokens of poisition $N$, namely, tokens corresponding to the last position). In particular, after generating a token from a drafter $f_j$ (where $j<m$). Such a ``draft'' token might be rejected in line~\ref{line:term1}, and is accepted otherwise, as discussed earlier. We can view this procedure as generating a draft token and sending it to ``verification''.
Sending verification tasks of a single draft token is only a private case of DSI, where $\lookahead = 1$.
For a sufficiently large SP degree, the number of such verification tasks that can run in parallel is unbounded. However, in practice, the number of available processors is given, hence the SP degree must be fixed. For example, given a single drafter (that is, $m=2$) and a single node with 8 GPUs, DSI must ensure that $SP \le 7$, assuming that 1 GPU is sufficient for computing the drafter.
To control the SP degree, we introduce a new hyperparameter, \lookahead, which is the number of draft tokens in every verification task sent to a target server. For $\lookahead > 1$, lines \ref{line:init_threads1} and \ref{line:init_threads2} are adjusted as follows: initiate $m-1$ threads $C_{J \oplus (j,1)}, C_{J \oplus (j,2)}, \dots, C_{J \oplus (j,m-1)}$ to generate \lookahead tokens concurrently and respectively from $f_1, f_2, \dots, f_{m-1}$, and initiate 1 thread $C_{J \oplus (j,m)}$ to generate 1 token from $f_m$. Here, we overload the notation for simplicity, allowing $J \oplus (j)$ to be empty so that $J \oplus (j,j') = (j')$.
This change reduces the number of invocations required.
For any given models $f_1, f_2, \ldots, f_m$ and an SP degree, there exist a sufficiently large \lookahead value such that there is at least one available target server by the time that any verification task is sent, so that verification tasks never wait to be processed by a target server.
Therefore, the \lookahead hyperparameter allows tuning DSI to use an arbitrary maximal number of available processing units.

Theoretically, without scaling the \lookahead such that $\lookahead \propto \frac{\text{target latency}}{\text{drafter latency}}$, the SP degree might grow to infinity. For example, if the time it takes to compute a forward pass of the drafter goes to $0$ or the time it takes to compute a forward pass of the target model goes to infinity.



\section{Proofs}\label{appendix:proofs}

\lossless*

\begin{proof}
We begin by demonstrating the losslessness of the algorithm. We would like to prove that when $v = k$, there is a thread $C_{J_k}$, that is the only thread that is labeled as a verifier, and it correctly computes the next token and that $J_k = J' \oplus (m)$ for some sequence $J' = (j_1, \dots, j_{k-1})$ of length $k-1$, where $x^{j_1, \dots, j_i}_i = x_i$ for all $i \in [k-1]$. We will prove this by induction on the value of $v$. In addition, we note that if this pattern is appreciated by the algorithm, then it is clearly a lossless algorithm. 

{\bf Base case ($v = 1$):} Initially, when $v = 1$, there is only one verifier, $C_{(m)}$, which runs the target model $f_m$. Thus, when it finishes, it will return the correct token, $x_1$. Since the verifier is relabeled only when the value of $v$ changes (see lines \ref{line:relabel}-\ref{line:update_v}), as long as $v = 1$, the only thread labeled as a verifier is $C_{(m)}$.

{\bf Induction hypothesis:} Assume that as long as $v = k$, there is only one thread $C_{J_k}$ labeled as a verifier, which returns the correct token $x_k$, and that $J_k = J' \oplus (m)$ for some $J' = (j_1, \dots, j_{k-1})$ of length $k-1$, where $x^{j_1, \dots, j_i}_i = x_i$ for all $i \in [k-1]$.

{\bf Induction step:} When $v$ is updated from $k$ to $k+1$, this change only occurs when the condition in line \ref{line:verifier} is met. This condition indicates that the single verifier thread $C_{J_k}$, which is of length $|J_k| = k$, has finished computing its output token. By the induction hypothesis, this thread returns $x_k$ as its output. Since $f_m$ is slower than all drafter models $f_1, \dots, f_{m-1}$, all threads $C_{J' \oplus (i)}$ have already finished computing their outputs. Thus, when executing lines \ref{line:term1}, \ref{line:term2}, and \ref{line:relabel}, the only threads that remain active are the descendants of $C_{J' \oplus (j^*)}$, and the only thread serving as a verifier is $C_{J' \oplus (j^*, m)}$. Since $x^{j_1,\dots,j_i}_i = x_i$ for all $i \leq k-1$ and $x^{j_1,\dots,j_{k-1},j^*}_{k} = x_k$, then $C_{J' \oplus (j^*, m)}$ simply computes the output of the target model $f_m$ on the correct sequence $x_{\le 0} \oplus x_1 \oplus \dots \oplus x_k$. Hence, it correctly returns the $(k+1)$th token $x_{k+1}$, as desired.

{\bf Time:} We notice that the algorithm terminates once it has computed the output of $C_{J_N}$. By Assumption~\ref{assumption:sum}, we have $\wall {\text{Algorithm}~\ref{alg:concurrent_informal}} =  \sum^{N}_{i=1} \wall{\text{computing } f_{j_i} \left( x_{\le i} \right)} $ and by Assumption~\ref{assumption:fm_slowest}, we have $\wall{\text{computing } f_{j_i} \left( x_{\le i} \right)} \leq \wall{\text{computing } f_{m} \left( x_{\le i} \right)}$. Together we obtain $\wall {\text{Algorithm}~\ref{alg:concurrent_informal}} \leq \sum^{N}_{i=1} \wall{\text{computing } f_{m} \left( x_{\le i} \right)}$ which is the amount of time that it takes to compute the output tokens without speculative inference.
\end{proof}

\example*

\begin{proof} To understand how it works, let $j_1 \in \{1,2\}$ be the smallest index such that $x^{j_1}_1 = x_1$ and for all $i\in [N-1]$, we recursively define $j_i \in \{1,2\}$ to be the smallest index such that $x^{j_1,\dots,j_{i}}_i = x_i$. We also fix $j_N=2$. In addition, let $i_0=0$ and $i_r$ be the $r$th index in $[N]$ such that $j_{i_r}=2$. We notice that it takes $t_1 (i_1-1) + t_2$ time units to compute the value of $x^{j_1,\ldots,j_{i_1}}_{i_1}$. This is because we first compute $x^{1}_{1}$, then $x^{1,1}_{1}$, continuing up to $x^{1,\ldots,1}_{i_1-1}$, and finally $x^{1,\ldots,1,2}_{i_1}$. Each of the first $(i_1-1)$ tokens takes $t_1$ time units, while the final token takes $t_2$ time units. After $t_1 (i_1 - 1) + t_2$ time units, we will have computed $x^{2}_{1}$, $x^{1,2}_{2}$, $x^{1,1,2}_3$, and so on, up to $x^{1,\ldots,1,2}_{i_1}$. Since $f_1$ consistently generates accurate tokens up to index $i_1-1$, once we observe that $x^{2}_{1}$ matches $x^{1}_{1}$, we know that $x^{1,2}_{2} = x_2$ and can then verify that $x^{1,1}_{2} = x_2$ is also correct. Once we verify that $x^{1,1}_{2} = x_2$, we can verify $x^{1,1,2}_{2}$ and continue this pattern to verify $x^{1,1,1}_{2}$, and so forth. We note that calculating all of these tokens up to the calculation of $x^{1,\dots,1,2}_{i_1}$ take at most $t_1 (i_1 - 1) + t_2$ time units. Thus, we can verify that $x^{1,\ldots,1,2}_{i_1} = x_{i_1}$ with at most $t_1 (i_1 - 1) + t_2$ time units. By the same argument as above, it takes $\sum_{r} (t_1 ((i_r-i_{r-1}) - 1) + t_2)$ time units to compute the value of $x^{j_1,\ldots,j_{N}}_{N}$ (and to verify its correctness). We notice that $Q = \sum_{r} (i_r - i_{r-1}-1)$ is the number of indices  $i \in [N-1]$ such that $j_i=1$. Since $\mathbb{E}[Q] = p(N-1)$, we have $\mathbb{E}\left[\sum_{r} (t_1 ((i_r-i_{r-1}) - 1) + t_2) \right] = t_1 p(N - 1) + t_2((1-p)(N-1) + 1)$.
\end{proof}

\DSIvsSI*

\begin{proof}
Suppose we have a drafter model $f_1$, a target model $f_2$, and a prompt $x_{\le 0}$. Assume that $f_1$ requires $t_1$ time units to compute each of its outputs, and $f_2$ requires $t_2$ time units, where $t_1 < t_2$. Assume that given the prompt $x_{\le i} = x_{\le 0} \oplus x_1 \oplus \dots \oplus x_i$, the probability that $f_1$ returns the (correct) token $x_{i+1}$ is $p$. Consider generating $N > k + 1$ tokens from $f_2$ using the SI (or DSI) algorithm with $\lookahead=k$. 
At time~$=0$, SI starts generating draft tokens, by the definition of SI. 
At time~$= kt_1$, SI completes generating the first $k$ draft tokens $x_1^{1}, x_2^{1,1}, \ldots, x_k^{1,\ldots,1}$. 
At time~$= kt_1 + t_2$, SI completes verifying the first $k$ tokens $x_1^{1}, x_2^{1,1}, \ldots, x_k^{1,\ldots,1}$. 
Let $A_1, A_2, \ldots, A_{k+1}$ be indicator variables sampled as follows. $A_i = 1$ with probability $p$ and $A_i = 0$ otherwise, for all $i \in \brackets{k+1}$. 
Let $n := \min \{i | A_i = 0\} - 1$. Note that $n$ is distributed as the number of accepted drafts among the first $k$ drafts of SI (or DSI).
SI completes generating the first $n+1$ tokens at time~$= kt_1 + t_2$ for any $n \in \set{0, 1, \ldots, k}$, by the definition of SI.
The first iteration of SI cannot output tokens at positions~$> k + 1$, by the definition of SI.
The earliest time at which SI can complete generating $x_{k+2}$ is by the end of its second iteration. 
Hence, SI completes generating $x_{k+2}$ at time~$\ge 2 (kt_1 + t_2)$.
Consider DSI with the same $f_1$, $f_2$, and \lookahead over at least $\ceilfn{\frac{t_2}{kt_1}}$ servers.
We show that DSI can complete generating $x_{k+2}$ at time~$\le 2 (kt_1 + t_2)$, and, in expectation, at time~$< 2 (kt_1 + t_2)$.
By the definition of DSI, DSI never preempt the current verifier.
At time~$= kt_1$, DSI invokes concurrently (i) the verifying of the batch containing the first $k$ tokens $x_1^{1}, x_2^{1,1}, \ldots, x_k^{1,\ldots,1}$ that are not yet verified, and (ii) the drafting of $x_{k+1}^{1,\ldots,1} x_{k+2}^{1,\ldots,1}, \ldots, x_{2k}^{1,\ldots,1}$, by the definition of DSI.
We use a coupling argument to align the two algorithms over the indicator variables $A_i$ for all $i$.
If $n=k$, then both SI and DSI complete generating the $(k+1)$th first token $x_{k+1}$ at time~$= kt_1 + t_2$. At that time, DSI either invokes a new current verifier thread or labels an existing thread as the current verifier (depending on $\frac{t_2}{t_1}$ and the \lookahead).
Hence, DSI completes generating $x_{k+2}$ at time~$\le kt_1 + 2 t_2$, exactly when the current verifier thread completes its verification.
DSI is faster than SI for all $t_1, t_2, k$ since $kt_1 + 2t_2 < 2 (kt_1 + t_2)$. 
Otherwise, both algorithms accept the first $n+1$ tokens at time~$\le kt_1 + t_2$. At that time, the proof repeats for $N - (n+1)$.
\end{proof}

\section{Experiments details} \label{appendix:experiments-details}

\subsection{TTFT and TPOT}\label{appendix:details-ttft-tpot}

This section elaborates on our separate experiment that estimates the expected TTFT and TPOT.

To ensure the \wait waiting times in our simulations are realistic, we distinguish between time to first token (TTFT) and time per output token (TPOT). We estimated the TTFT and TPOT latencies for each combination of a model and a dataset independently on a single NVIDIA A100 80GB GPU. 

For each combination \comb{$d,f$} of a dataset $d$ and a corresponding target or drafter model $f$, we estimate the average latencies of $f$ in the following manner. First, we select 50 prompts from $d$ uniformly at random, and for each prompt, generate 20 tokens using $f$, measuring the latency for each token in milliseconds. Following prior work, we distinguish between Time to First Token (TTFT) generation and Time Per Output Token (TPOT) generation (of all subsequent 19 tokens).
Finally, we calculate the average TTFTs and TPOTs over all prompts per model-dataset pair, to estimate the expected latency of a single forward pass.
In all our experiments that measure the latency of computing forwards (i.e., TTFT and TPOT), models run with an MP degree of one and without memory offloading. We used a single NVIDIA A100 80GB GPU and the measured model was fully loaded to the GPU memory.

Our main experiment (Table~\ref{table:dsi_speedups_offtheshelf}) and an ablation experiment (§\ref{subsec:ablation}) both use the estimated TTFT and TPOT as follows: generating the first token adds a \wait of TTFT while generating each subsequent token adds a \wait of TPOT. Since TTFT is usually significantly longer than TPOT (which dominates the overall sequence generation time), all latency figures in Table~\ref{table:dsi_speedups_offtheshelf} refer to TPOT, for brevity. The estimated TPOT latency of the target model and the drafter are shown in ``Target Latency (ms)'' and ``Drafter Latency (ms)'', respectively. We also report the ratio between the target and drafter latencies and present it in percentages (``Drafter Latency (\%)''). The effective prefilling-decoding latencies ratio of every pair of a model and a dataset is provided in Table~\ref{table:ttft_ttot}.

\begin{table}[htb]
  \scriptsize
  \centering
  \caption{The ratio between the estimated time to first token (TTFT) and time per output token (TPOT) for various off-the-shelf models and datasets.}
  \begin{tabular}{llr}
  \toprule
                                 Model &                 Dataset &  TTFT/TPOT Ratio \\
  \midrule
                 lmsys/vicuna-13b-v1.3 &           cnn\_dailymail &             5.36 \\
                    double7/vicuna-68m &           cnn\_dailymail &             1.04 \\
                 lmsys/vicuna-13b-v1.3 &      danielkorat/alpaca &             1.15 \\
                    double7/vicuna-68m &      danielkorat/alpaca &             1.05 \\
                  lmsys/vicuna-7b-v1.3 &           cnn\_dailymail &             4.53 \\
                    double7/vicuna-68m &           cnn\_dailymail &             1.06 \\
                  lmsys/vicuna-7b-v1.3 &      danielkorat/alpaca &             1.19 \\
                    double7/vicuna-68m &      danielkorat/alpaca &             1.06 \\
                     bigcode/starcoder & openai/openai\_humaneval &             1.35 \\
             bigcode/tiny\_starcoder\_py & openai/openai\_humaneval &             1.19 \\
                     bigcode/starcoder &                    mbpp &             1.54 \\
             bigcode/tiny\_starcoder\_py &                    mbpp &             1.20 \\
  microsoft/Phi-3-medium-128k-instruct & openai/openai\_humaneval &             1.29 \\
    microsoft/Phi-3-mini-128k-instruct & openai/openai\_humaneval &             1.23 \\
  microsoft/Phi-3-medium-128k-instruct &                    mbpp &             1.43 \\
    microsoft/Phi-3-mini-128k-instruct &                    mbpp &             1.27 \\
  microsoft/Phi-3-medium-128k-instruct &           cnn\_dailymail &             4.77 \\
    microsoft/Phi-3-mini-128k-instruct &           cnn\_dailymail &             3.88 \\
  \bottomrule
\end{tabular}
  \label{table:ttft_ttot}
\end{table}

\subsection{Acceptance rate}\label{appendix:details-acceptance-rate}

This section elaborates on our separate experiment that estimates the expected acceptance rate.

To ensure our evaluation is realistic, we used real-world acceptance rates, calculated as follows. For any given combination of \comb{target, drafter, dataset}, we estimate their acceptance rate independently.
For each input prompt from the dataset, we generate tokens from both the drafter and the target model. We then consider the lengths of the longest sequences of exact token matches between the target and the drafter. Below is a simplified example where tokens are counted as English words. If the target generates
{\small ``We can only see a short distance ahead, but we can see plenty there that needs to be done. [...]''} and the drafter generates {\small ``We can only see a short distance ahead, we done. [...]''}, then the longest sequence of exact matches is 8 tokens long.
The expected number of accepted drafts is $\bar{n} := \frac{1}{N} \sum_{i=1}^{N} n_i$ where $n_i$ is the number of accepted draft tokens for the $i$th prompt.
The acceptance rate is then calculated from a geometric distribution, $(\text{acceptance rate}) := 1 - \frac{1}{1 + \bar{n}}$.
In general, the reported acceptance rate is guaranteed to converge to the expected acceptance rate as $N \to \infty$ (Appendix~\ref{appendix:acceptance-rate-from-geometric}).
In this experiment, the reported acceptance rate is calculated based on generating 256 tokens for each prompt.

This experiment suggests that off-the-shelf model ``families'' like StarCoder~\citep{li2023starcoder} or Vicuna~\citep{zheng2023judging} can form good pairs of target and drafter because their acceptance rates are relatively higher, as reported in Table~\ref{table:dsi_speedups_offtheshelf}. These families consist of models of different sizes that were trained similarly and on similar datasets. We observe that even relatively small drafters demonstrate good alignment with larger models from the same family. For example, \texttt{Starcoder-168M} (drafter) and \texttt{Starcoder-15B} (target) yield an acceptance rate of 93\%.

\subsubsection{Acceptance rate from geometric distribution}\label{appendix:acceptance-rate-from-geometric}
Both our main experiment (Table~\ref{table:dsi_speedups_offtheshelf}) and ablation (§\ref{subsec:ablation}) measure the total latency of Non-SI, SI, and DSI. The total latency is dominated by the target and drafter forward passes. The numbers of required target and drafter forwards depend on the acceptance rate. To count the number of forwards, we can simulate the algorithms by sampling from the models, execute the verification procedure (for SI and DSI), and count the number of required forwards. Since sampling from models is inherently stochastic, the total latency $T$ is a random variable, and we focus on its expectation, $\mathbb{E}[T]$.

There exists an acceptance rate $p$ such that the expected latency satisfies $\mathbb{E}[T] = T$. This acceptance rate can be estimated in two ways. The first method directly computes the likelihood that a token is accepted by the verifier. Alternatively, $p$ can be estimated by fitting a geometric distribution to the acceptance process. Specifically, this involves calculating the average number of tokens accepted per iteration and extracting $p$ as the parameter of the geometric distribution.  

Both approaches are consistent. As the number of iterations approaches infinity, the estimated acceptance rate converges to the true empirical acceptance rate $p$. Consequently, the estimated total latency based on $p$ converges to the actual total latency. This guarantees that the reported latency and the computed latency are equal in expectation.

Prior works in the field have developed theories and methods based on the simplifying assumption that token acceptance is independently and identically distributed (i.i.d.); for example, see \citet{leviathan2023fast}. In this paper, estimating the acceptance rate from the fitted geometric distribution is based on the same assumption. The results in \citet{mamou2024accelerating} reveal that such an assumption is actually realistic. Figure 4 in \citet{mamou2024accelerating} shows that the number of accepted tokens per iteration follows a geometric distribution.

\subsection{Ablation via offline simulation}\label{appendix:ablation-details}

This section provides more implementation details about the ablation analysis, involving a complementary ``offline'' experiment (§\ref{subsec:ablation}).

We simulate non-SI over all possible configurations of \comb{drafter latency, acceptance rate} which are the cartesian product $\set{0.01, 0.02, \ldots, 1} \times \set{0, 0.01, 0.02, \ldots, 1}$ respectively. We simulate SI over all possible configurations of \comb{drafter latency, acceptance rate, \lookahead} where \lookahead~$\in~\set{1, 2, \ldots, 200}$. 
However, given an arbitrary number of available servers, it is possible to tune the \lookahead hyperparameter such that DSI orchestrates only available servers.
To ensure that DSI could be deployed on a single node with up to eight GPUs, we simulate DSI over all configurations of SI that satisfy Equation~\ref{eq:lookahead} for $\text{SP}=7$, assuming that the drafter runs on a single GPU. For each combination of \comb{drafter latency, acceptance rate, \lookahead}, we run the algorithm (SI or DSI) for five repeats and average the results. Since the \lookahead is a tunable parameter, our experiment assumes that it will be optimized by the user so that SI is optimized for each configuration. To let SI select its optimal \lookahead, the expected latency for each \comb{drafter latency, acceptance rate} configuration is the minimal average latency among all the \lookahead values.

To calculate the speedup of algorithm X over algorithm Y per \comb{drafter latency, acceptance rate}, we divide the latency of Y by the latency of X.
The speedups are not smooth for drafter latencies~$<~20\%$ due to the discretization of the \lookahead hyperparameter.
For example, the speedups for $\text{\lookahead}=5$ are smooth for both SI and DSI as seen in Appendix~\ref{sec:si_with_static_lookahead}.

As in online experiments, every forward pass in offline experiments is replaced with adding the realistic \wait time to the total latency. Therefore, the latency of non-SI in offline simulations is simply the target forward latency (i.e., the average time that it takes to compute a single forward pass of the target model, which is measured in a separate experiment that is described in the previous section) times the number of tokens to generate. For example, if the target forward latency is $30 \ms$ and the number of tokens to generate is $100$, then the latency of non-SI is $3000 \ms$. To estimate the expected latency of SI, we must introduce the \lookahead hyperparameter and the acceptance rate of the drafter. Under the randomness of the acceptance rate, we count the number of target and drafter forward passes. Every draft token that is accepted is equivalent to generating another output token, hence reducing the number of remaining target forwards by one. For example, if the acceptance rate corresponds to accepting $1.5$ drafts per iteration, then the expected number of target forward passes is $\ceilfn{\frac{100}{2.5}} = 40$ because every target forward is expected to yield $1.5+1=2.5$ output tokens (i.e., accepted drafts plus an additional token). The number of drafter forwards depends on the \lookahead hyperparameter. For example, if $\lookahead=5$, then the number of drafter forwards is $40 \cdot 5 = 200$, because we have $5$ drafter forwards for each target forward. The expected latency of SI is then the sum of the expected latencies of the target and drafter forwards. For example, if the drafter forward latency is $6 \ms$, then the expected latency of SI is $200 \cdot 6 + 40 \cdot 30 = 2400 \ms$, which is $1.25$x faster than non-SI.
The simulation of DSI is similar to the simulation of SI. In both, the only randomness is the acceptance rate of the drafter.
However, since DSI can successfully hide the latency of almost all the target forwards, naive summing of the forward pass latencies cannot estimate its total latency accurately. Instead, the DSI simulation is based on an analysis that sums target forward latencies only if they are not hidden by the speculation parallelism. 

\subsection{Simulation of SI}
We open-sourced our implementation of DSI for the single-node setup and the code for all the experiments in this paper. Below is the code for estimating an lower bound of the end-to-end latency of the SI algorithm. The end-to-end latency is the overall wall time, including the prefilling and decoding latency, and excluding any tokenization latency. It is an lower bound because we ignore any real-world latencies differ from the forward passes latencies. For example, we ignore all the latencies corresponding to switching between the models, communication between the CPU and the GPU, etc.
{\scriptsize
\begin{lstlisting}[language=Python]
def si(target_latency: float, drafter_latency: float, lookahead: int, N: int) -> float:
  total_cost: float = 0
  total_toks: int = 0
  while total_toks < N:
      num_accepted: int = get_num_accepted()
      total_toks += num_accepted + 1
      total_cost += lookahead * drafter_latency + target_latency
  return total_cost
\end{lstlisting}
}

\subsection{Models}
\label{appendix:models}

For all models, we retrieve model weights from \href{https://huggingface.co/models}{Hugging Face}.
For clarity and reproducibility, we provide the URLs for each model used:\\
\begin{itemize}
\item \texttt{Vicuna-13B}:  \url{https://huggingface.co/lmsys/vicuna-13b-v1.3}, distributed under Non-Commercial License. 
\item \texttt{Vicuna-7B}:  \url{https://huggingface.co/lmsys/vicuna-7b-v1.3}, distributed under Non-Commercial License. 
\item \texttt{Vicuna-68M}: \url{https://huggingface.co/double7/vicuna-68m}, distributed under the \href{https://choosealicense.com/licenses/apache-2.0/}{Apache License 2.0}. 
\item \texttt{Starcoder-15B}: \url{https://huggingface.co/bigcode/starcoder}, distributed under the \href{https://www.licenses.ai/}{Responsible AI License}.
\item \texttt{Starcoder-168M}: \url{https://huggingface.co/bigcode/tiny_starcoder_py}, also distributed under the \href{https://www.licenses.ai/}{Responsible AI License}.
\item \texttt{Phi3-14B}: \url{https://huggingface.co/microsoft/Phi-3-medium-128k-instruct} distributed under the \href{https://opensource.org/license/mit}{MIT license}.
\item \texttt{Phi3-4B}: \url{https://huggingface.co/microsoft/Phi-3-mini-128k-instruct} distributed under the \href{https://opensource.org/license/mit}{MIT license}.

\end{itemize}

\subsection{Datasets and Prompts}
\label{appendix:datasets_and_prompts}
We use standard datasets from \href{https://huggingface.co/datasets}{Hugging Face} and standard prompts from the state-of-the-art.

\subsubsection{MBPP}
\href{https://huggingface.co/datasets/mbpp}{MBPP} dataset consists of crowd-sourced Python programming problems and is distributed under the \href{https://creativecommons.org/licenses/by/4.0/deed.en}{cc-by-4.0 License}.

Concerning the prompt, we followed~\citep{bigcode-evaluation-harness, fried2023incoder} and included the description of the programming task and a single test to verify solution, in order to help the model catch the signature of the function (see Figure~\ref{fig:mbpp:prompt}).

\begin{figure}[htbp]
\centering
\hrulefill
\begin{verbatim}
"""{text}
{test_list[0]}
"""
\end{verbatim}
\hrulefill
\caption{MBPP Prompt}
\label{fig:mbpp:prompt}
\end{figure}

\subsubsection{HumanEval}
\href{https://huggingface.co/datasets/openai_humaneval}{HumanEval}
dataset includes programming problems and is distributed under the \href{https://choosealicense.com/licenses/mit/}{MIT License}.

Prompt contains only \texttt{prompt} field from the dataset.

\subsubsection{CNN-DM}
\href{https://huggingface.co/datasets/cnn_dailymail}{CNN-DM} contains news articles and is distributed under the \href{https://choosealicense.com/licenses/apache-2.0/}{Apache License 2.0}.

We included the \texttt{article} field in the prompt as in Figure~\ref{fig:cnndm:prompt}. 

\begin{figure}[htbp]
\centering
\hrulefill
\begin{verbatim}
"""Summarize:
{article}
Summary:
"""
\end{verbatim}
\hrulefill
\caption{CNN-DM Prompt}
\label{fig:cnndm:prompt}
\end{figure}

\subsubsection{Alpaca}
\href{https://huggingface.co/datasets/tatsu-lab/alpaca}{Alpaca} dataset contains instructions and demonstrations. It is distributed under the \href{https://creativecommons.org/licenses/by-nc/4.0/deed.en}{cc-by-nc-4.0 License}.

We follow~\citet{taori2023alpaca} to define the prompts. For samples with a non-empty input field, we use the prompt as in Figure~\ref{fig:alpaca1:prompt} while for samples with empty input field, we use the prompt as in Figure~\ref{fig:alapca2:prompt}.

\begin{figure}[htbp]
\centering
\hrulefill
\begin{verbatim}
 """Below is an instruction that describes a
 task, paired with an input that provides
 further context. Write a response that
 appropriately completes the request.

### Instruction:
{instruction}

### Input:
{input}

### Response:
"""

\end{verbatim}
\hrulefill
\caption{Alpaca prompt for samples with a non-empty input field.}
\label{fig:alpaca1:prompt}
\end{figure}

\begin{figure}[htbp]
\centering
\hrulefill
\begin{verbatim}
"""Below is an instruction that describes a
task. Write a response that appropriately
completes the request.

### Instruction:
{instruction}

### Response:
"""
\end{verbatim}
\hrulefill
\caption{Alpaca prompt for samples with empty input field.}
\label{fig:alapca2:prompt}
\end{figure}

\subsection{Speedups for \lookahead = 5}
\label{sec:si_with_static_lookahead}

Figure~\ref{fig:si_with_static_lookahead} illustrates the performance comparison between Speculative Inference (SI), non-Speculative Inference (non-SI), and Dynamic Speculative Inference (DSI) for a static lookahead of $\lookahead = 5$. The figure consists of three heatmaps, each representing a pairwise comparison of these algorithms across different drafter speeds and accuracies.

In Figure~\ref{fig:si_with_static_lookahead}(a), we compare SI to non-SI. The pink regions indicate scenarios where SI is slower than non-SI, which occurs when the drafter is either too slow or inaccurate. This highlights the limitations of SI in certain conditions.

Figures~\ref{fig:si_with_static_lookahead}(b) and (c) demonstrate that DSI consistently outperforms both SI and non-SI across all drafter configurations. This empirical evidence supports our theoretical findings that DSI is faster than both SI and non-SI in expectation, regardless of the drafter's speed or accuracy.

These results underscore the robustness and efficiency of DSI compared to existing inference methods, particularly in scenarios where SI might falter due to suboptimal drafter performance.

\begin{figure}[htbp]
    \centering
    \begin{tabular}{ccc}
    \includegraphics[width=0.3\textwidth]{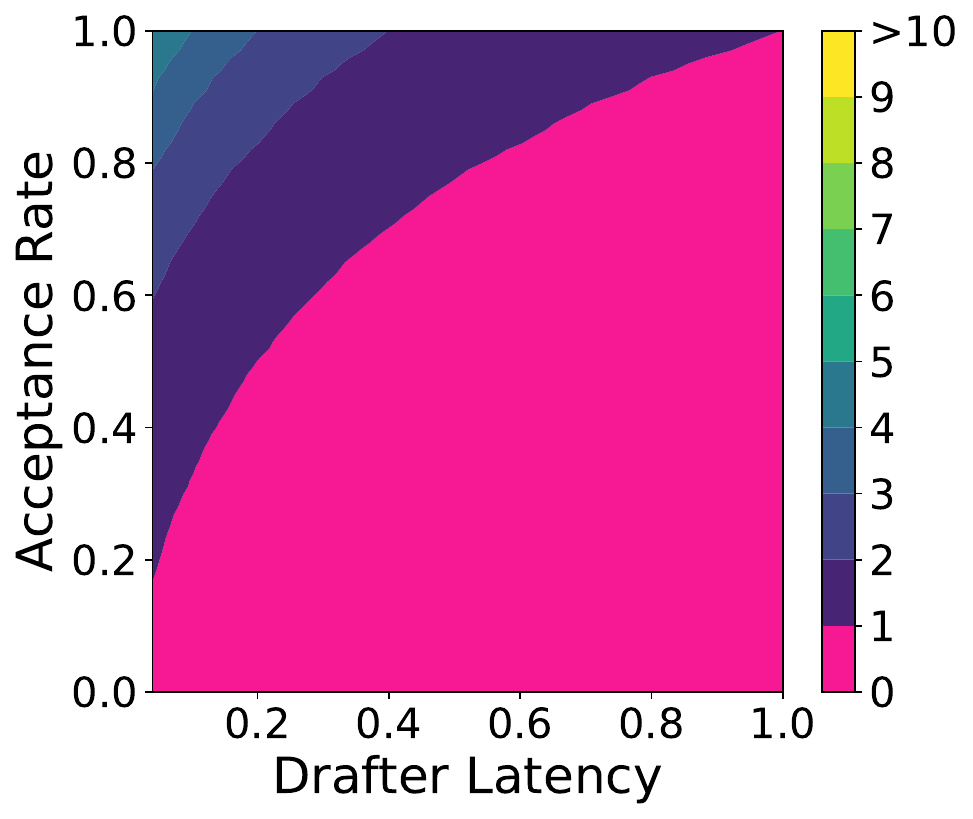} & 
     \includegraphics[width=0.3\textwidth]{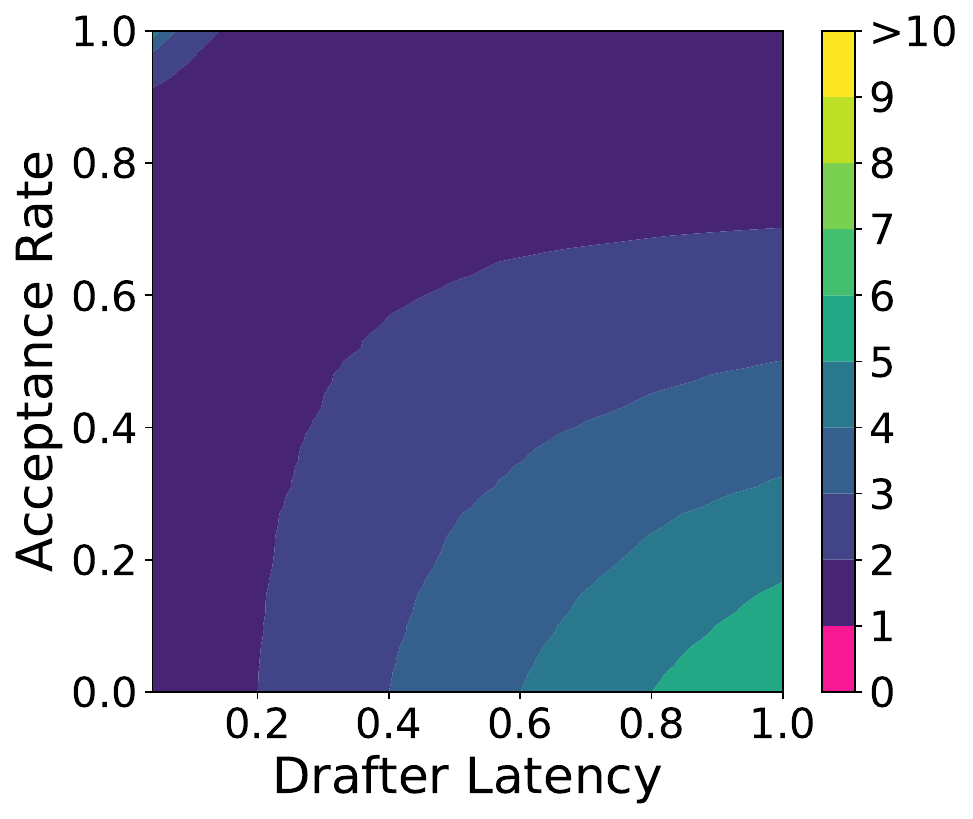} &
    \includegraphics[width=0.3\textwidth]{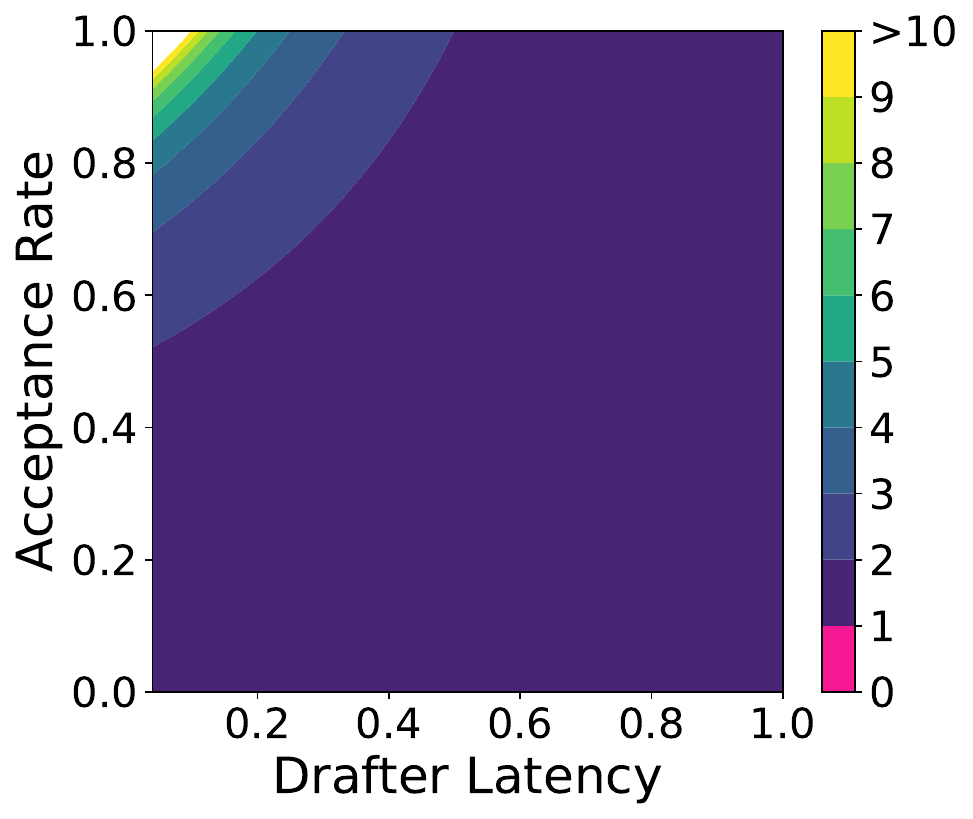} \\
    {\small {\bf (a)} SI/non-SI} & {\small {\bf (b)} SI/DSI} & {\small {\bf (c)} non-SI/DSI}
    \end{tabular}
    \caption{Each heatmap (labeled ``X/Y'') plots the ratio between the run time of algorithm X and the run time of algorithm Y. SI is run with $\text{\lookahead}=5$.
  {\bf (a)}: SI is slower than non-speculative inference (non-SI) when the drafter is either slow or inaccurate enough (pink marks slowdowns). DSI is never slower than either SI or non-SI. {\bf (b, c)}: DSI is always faster than speculative inference (SI) and non-speculative inference (non-SI) algorithms for various drafters.}
    \label{fig:si_with_static_lookahead}
\end{figure}

\end{document}